\newtheorem{remark}{Remark}
\newtheorem{condition}{Condition}
\newtheorem{proposition}{Proposition}
\numberwithin{condition}{section}
\numberwithin{assumption}{section}
\numberwithin{remark}{section}
\numberwithin{equation}{section}
\numberwithin{lemma}{section}
\numberwithin{definition}{section}
\numberwithin{theorem}{section}
\numberwithin{proposition}{section}
\numberwithin{table}{section}
\numberwithin{figure}{section}
\numberwithin{theorem}{section}
\numberwithin{corollary}{section}
\numberwithin{property}{section}
\numberwithin{algorithm}{section}
\newcommand{\EQ}{\begin{equation}}
\newcommand{\EN}{\end{equation}}
\newcommand{\EQS}{\begin{equation*}}
\newcommand{\ENS}{\end{equation*}}
\newcommand{\ds}{\displaystyle}
\newcommand*\xbar[1]{%
  \hbox{%
    \vbox{%
      \hrule height 0.5pt
      \kern0.5ex%
      \hbox{%
        \kern-0.1em%
        \ensuremath{#1}%
        \kern-0.1em%
      }%
    }%
  }%
}
\def\n1{n}
\def\argmax{\mathop{\rm arg\,max}}
\newcommand{\qq}{\mathfrak{q}}
\newcommand{\pp}{\mathfrak{p}}
\newsavebox{\savepar}
\numberwithin{equation}{section}
\numberwithin{table}{section}
\numberwithin{figure}{section}
\def\argmax{\mathop{\rm arg\,max}}
\begin{document}
\title{
Optimal performance of a tontine overlay subject to withdrawal constraints
}

\author{Peter A. Forsyth\thanks{David R. Cheriton School of Computer Science,
        University of Waterloo, Waterloo ON, Canada N2L 3G1,
        \texttt{paforsyt@uwaterloo.ca}, +1 519 888 4567 ext.\ 34415.}
  \and
        Kenneth R. Vetzal\thanks{School of Accounting and Finance,
        University of Waterloo, Waterloo ON, Canada N2L 3G1,
        \texttt{kvetzal@uwaterloo.ca}, +1 519 888 4567 ext.\ 46518.}
     \and
      Graham Westmacott\thanks{Richardson Wealth,
        540 Bingemans Centre Drive, Suite 100,
        Kitchener, ON, Canada N2B 3X9,
        \texttt{Graham.Westmacott@RichardsonWealth.com}, +1 800 451 4748.}
}

\maketitle

%linenumbers %Turns on line numbering.

\begin{abstract}
We consider the holder of an individual tontine retirement account, with maximum and minimum
withdrawal amounts (per year) specified.  The tontine account holder initiates
the account at age 65, and  earns mortality credits
while alive, but forfeits all wealth in the account upon death.  The holder desires to maximize
total withdrawals, and minimize the expected shortfall, assuming the 
holder survives to age 95. The investor controls the amount
withdrawn each year and the fraction of the investments in stocks and bonds.  The optimal controls
are determined based on a parametric model fitted to almost a century of market data.
The optimal control algorithm is based on dynamic programming and solution of
a partial integro differential equation (PIDE) using Fourier methods.
The optimal strategy (based on the parametric model) is tested out of sample using stationary block bootstrap resampling of the
historical data.  In terms of an expected total withdrawal, expected shortfall (EW-ES)
efficient frontier, the tontine overlay greatly outperforms an optimal strategy
(without the tontine overlay), which in turn outperforms a constant weight strategy
with withdrawals based on the ubiquitous four per cent rule.

\vspace{5pt}
\noindent
\textbf{Keywords:} tontine, decumulation, expected shortfall, optimal stochastic control

\noindent
\textbf{JEL codes:} G11, G22\\
\noindent
\textbf{AMS codes:} 91G, 65N06, 65N12, 35Q93
\end{abstract}

\section{Introduction}
It is now commonplace to observe that defined benefit (DB) plans are disappearing.  A recent OECD
study \citep{OECD_2019} observes that less than 50\% of pension assets in 2018 were held in DB
plans in over 80\% of countries reporting.  Of course, the level of assets in defined contribution (DC)
plans is a lagging indicator, since historically, many employees were covered by traditional
DB plans.  These traditional DB plans still have a sizeable share of pension assets, simply because
these plans have accumulated contributions over a longer period of time.

Consider the typical case of a DC plan investor upon retirement.  Assuming that the investor has managed to
accumulate a reasonable amount in her DC plan, the investor now faces the problem of 
determining a decumulation strategy, i.e. how to invest and spend during retirement.
It is often suggested that retirees should purchase annuities, but this is
an unpopular strategy \citep{Peijnenburg2016}.  \citet{MacDonald2013} note that this avoidance
of annuities can be entirely rational. 

A major concern of DC plan investors during the decumulation phase is,
naturally, running out of savings.  Possibly the most widely
cited benchmark strategy is the {\em 4\% rule} \citep{Bengen1994}.  This rule
posits a retiree who invests in a portfolio of 50\% stocks and 50\% bonds, rebalanced
annually, and withdraws 4\% of the original portfolio value each year (adjusted for
inflation).  This strategy would have
never depleted the portfolio over any rolling thirty year historical period tested by Bengen
on US data.
This rule has been revisited many times. For example, \citet{Guyton-Klinger:2006} suggest several
heuristic modifications involving withdrawal amounts and investment strategies.  

Another approach has been suggested by \citet{Waring2015}, which they term an Annually
Recalculated Virtual Annuity (ARVA) strategy.  The idea here is that the amount withdrawn
in any given year should be based on the cash flows from a virtual (i.e. theoretical)
fixed   term  annuity that could be purchased using the existing value of the portfolio.
In this case, the DC plan can never run out of cash, but the withdrawal amounts 
can become arbitrarily small.

Turning to asset allocation strategies, \citet{Irlam:2014} used dynamic programming methods
to conclude that deterministic (i.e. glide path) allocation strategies are sub-optimal.
Of course, the asset allocation strategy and the withdrawal strategy are intimately linked.
A more systematic approach to the decumulation problem involves formulating decumulation
strategies as a problem in optimal stochastic control.  The objective function for this
problem involves a measure of risk and reward, which are, of course, conflicting measures.
\citet{Forsyth_2021_b} uses the withdrawal amount and the asset allocation (fraction
in stocks and bonds) as controls.  The measure of reward is the total (real) accumulated
withdrawal amounts over a thirty year period.  The withdrawal amounts have minimum and
maximum constraints, hence there is a risk of depleting the portfolio.
The measure of risk is the expected shortfall
at the 5\% level, of the  (real) value of the portfolio at the thirty year mark.  
Utilizing both withdrawal amounts and asset allocation as controls considerably reduces
the risk of portfolio depletion compared to fixed allocation or fixed withdrawal strategies.

A recent innovation in retirement planning involves the use of modern tontines 
\citep{Donnelly_2014,Donnelly_2015,Milevsky_2015_a,Fullmer_2019_a,Weinhart_2021,Winter_2020,Milvesky_book}.  
In a tontine,
the investor makes an irrevocable investment in a pooled fund for a fixed time frame.  If the investor
dies during the time horizon of the investment, the investor's portfolio is divided amongst
the remaining (living) members of the fund.  If the investor survives until the end of 
the time horizon, then she will earn mortality credits from those members who have
passed away.  Unlike an annuity, there are no guaranteed cash flows, since typically the
funds are invested in risky assets.  Since there are no guarantees, the expected cash flows
from a tontine are larger than for an annuity (with the same initial investment).
Some authors have argued that the {\em annuity puzzle} should be replaced
by the {\em tontine puzzle}, i.e. since tontines seem to very efficient products
for pooling longevity risk, it is puzzling that the tontine market is
still in its infancy \citep{Chen_2022_x}.
Pooled funds with tontine characteristics have been in use for some
time.\footnote{The variable annuity funds offered by TIAA {\url{https://www.tiaa.org/public/}},
the University of British Columbia pension plan {\url{https://faculty.pensions.ubc.ca/}},
and the Australian Q-super fund {\url{https://qsuper.qld.gov.au/}} can all be viewed as having
tontine characteristics.
However, the Q-super fund takes the approach of 
averaging mortality credits over the entire pool, 
giving age independent mortality credits, which would appear to violate
actuarial fairness. \url{https://i3-invest.com/2021/04/behind-qsupers-retirement-design/}
The Q-super fund is perhaps more properly termed a collective defined contribution (CDC) fund.
CDCs {\url{https://www.ft.com/content/10448b2c-1141-4d2e-943c-70cce2caec52}} have been criticized for
lack of transparency and fairness.}
We should also mention that retail investors may find the concept of a tontine appealing,
simply due to the peer-to-peer model for managing longevity risk, which is also
consistent with the trend towards financial disintermediation.\footnote{See \citet{Benthem_2018}
for an experiment with setting up a tontine using blockchain techniques.}
However, tontines may also require changes to existing legislation in some jurisdictions
\citep{GRI_2021}.  
There have also been suggestions for government management of tontine
accounts \citep{Fullmer_2022_a,Fullmer_2022_c}.
The attractiveness of tontines, from a behavioral finance perspective,
is discussed in \citet{chen_2021}.
For an overview comparison of modern tontines to existing decumulation products, we refer the
reader to \citet{Bar_2022}.

Our focus in this article is on individual tontine accounts \citep{Fullmer_2019_a}, whereby
the investor has full control over the asset allocation in her account.  We also allow the investor
to control the withdrawal amount from the account, subject to maximum and minimum constraints.
Usually it is suggested that withdrawal amounts from a tontine account cannot be increased,
to avoid moral hazard issues.\footnote{An obvious case would be if an investor was given
a medical diagnosis with a high probability of a poor outcome, at which point the investor
would withdraw all remaining funds in her account.}   However, we view the maximum withdrawal as the desired withdrawal,
allowing temporary reductions in withdrawals to minimize  sequence of return risk and probability of ruin.

Consider an investor whose objective function uses reward as measured by total expected accumulated
(real) withdrawals (EW) over a thirty year period.  As a measure of risk, the investor uses
the expected shortfall (ES) of the portfolio at the thirty year point. In this work,
we define the expected shortfall as the mean of the worst 5\% of the outcomes
at year thirty.  The investor's controls
are the amount withdrawn each year, and the allocations to stocks and bonds.
The investor follows an optimal strategy to maximize this objective function.

Alternatively, the investor can use the same objective function, with the same controls,
but this time add a tontine overlay (i.e. the investor is part of a pooled tontine).
The investor has control over the withdrawals (subject of course to the same maximum
and minimum constraints), and the allocation strategy.  

Of course, we expect that the investor who uses the tontine overlay would achieve a better result
than without the overlay, due to the mortality credits earned (we assume that the investor does
not pass away during the thirty year investment horizon).  However, this does not come without
a cost.  If the investor passes away, then her portfolio is forfeited.

Therefore, the investor must be compensated with a sizeable reduction in the risk of portfolio depletion,
compared to the no-tontine overlay case.  The objective of this article is to quantify this reduction,
assuming optimal policies are followed in each case.

More precisely, we consider a 65-year old retiree, who can invest in a portfolio consisting of 
a stock index and a bond index, with yearly withdrawals, and rebalancing.  The investor desires
to maximize the multi-objective function in terms of the risk and reward measures described above,
evaluated at the thirty year horizon (i.e. when the investor is 95).  

We calibrate a parametric stochastic model for real (i.e. inflation adjusted) stock and bond
returns, to almost a century of market data.  We then solve the optimal stochastic control
problem numerically, using dynamic programming.  Robustness of the controls is then tested
using block bootstrap resampling of the historical data.

Our main conclusion is that for a reasonable specification of acceptable tail risk (i.e. expected shortfall),
the expected total cumulative withdrawals (EW) are considerably larger with the tontine overlay,
compared to without the overlay.  This conclusion holds even if the tontine overlay has fees of the
order of 50-100 bps per year. Consequently, if the retiree has no bequest motive, and is primarily concerned
with the risk of depleting her account, then a tontine overlay is an attractive solution.

It is also interesting to note that the optimal control for the withdrawal amount is (to a good approximation)
a bang-bang control, i.e. it is only optimal to withdraw either the maximum or minimum amount in any year.
The allocation control essentially starts off with 40-50\% allocation to stocks. The median allocation control
then rapidly reduces the fraction in equities to a very small amount after $5-10$ years.  The median withdrawal control
starts off at the minimum withdrawal amount, and then rapidly increases withdrawals to the maximum after $2-5$ years.
The precise timing of the switch from minimum withdrawal to maximum withdrawal is simply a function of how much
depletion risk (ES) the investor is prepared to take.

\section{Overview of Individual Tontine Accounts}
\subsection{Intuition}
We give a brief overview of modern tontines in this
section. We restrict attention to the case of an
individual tontine account \citep{Fullmer_2019_a}, which is a constituent
of a perpetual tontine pool.
Consider a pool of $m$ investors, who are alive at time $t_{i-1}$.  Let $v_i^j$
be the balance in the portfolio of  investor $j$ at time $t_i$.  In a tontine, if  investor $j$
participates in a tontine pool in time interval $(t_{i-1}, t_i)$, and investor $j$ dies
in that interval, then her portfolio $v_i^j$ is forfeited  and given to the surviving
members of the pool in the form of mortality credits (gains).  Suppose that the probability
that $j$ dies in $(t_{i-1}, t_i)$ is $q_{i-1}^j$.  
Let the tontine gain (mortality credit) for investor $j$, for the period $(t_{i-1}, t_i)$, paid out at time $t_i$,
be denoted by $c_i^j$.  
The tontine will be a fair game if,
for each player $j$, the expected gain from participating in the tontine is zero,
\begin{eqnarray}
   -q_{i-1}^j v_i^j +( 1 - q_{i-1}^j) c_i^j = 0 ~,
\end{eqnarray}
and solving for the tontine gain $ c_i^j$ gives
\begin{eqnarray}
   c_i^j & = & \overbrace{ \biggl(\frac{ q_{i-1}^j}{ 1 - q_{i-1}^j} }^{Gain~rate} \biggr) v_i^j
       ~. ~\label{tontine_gain}
\end{eqnarray}
For notational convenience, we define the tontine gain rate at $t_i$ for investor $j$ as
\begin{eqnarray}
   ( \mathbb{T}_i^g )^j =  \biggl(\frac{ q_{i-1}^j}{ 1 - q_{i-1}^j} \biggr)  \label{tontine_gain_rate} ~.
\end{eqnarray}
In our optimal control formulation, we will typically drop the superscript $j$ from equation (\ref{tontine_gain_rate}),
\begin{eqnarray}
    \mathbb{T}_i^g  =  \biggl(\frac{ q_{i-1}}{ 1 - q_{i-1}} \biggr)  \label{tontine_gain_rate_special} ~,
\end{eqnarray}
since we will consider a given investor $j$ with conditional mortality probability of $q_{i-1}$ in $(t_{i-1}, t_i)$.

\subsection{Group Gain}\label{group_section}
Consider tontine members $j=1,...,m$ who are are alive at $t_{i-1}$.
Let
\begin{eqnarray}
  {\bf{1}}^j_i & = & \begin{cases}
                      1 & {\text{ Investor }} j {\text{ is alive at }} t_{i-1} {\text{ and alive at }} t_i\\
                      0 & {\text{ Investor }} j {\text{ is alive at }} t_{i-1}  {\text{ and dead at }} t_i\\
                   \end{cases} \nonumber \\
   E[ {\bf{1}}^j_i ] & = & 1 -q_{i-1}^j \label{indicator_def} ~,
\end{eqnarray}
where $E[\cdot] $ is the expectation operator.
Note that the total tontine gain for all members at $t_i$  is
\begin{eqnarray}
{\text{ Total Tontine Gains }} & = & \sum_{j=1}^m  {\bf{1}}^j_i c_i^j  \nonumber \\
                               & = & \sum_{j=1}^m {\bf{1}}^j_i \biggl(\frac{ q_{i-1}^j}{ 1 - q_{i-1}^j}  \biggr) v_i^j ~,
                                  \label{total_gain}
\end{eqnarray}
and that the total amount forfeited by members who have died in $(t_{i-1},t_i)$ is
\begin{eqnarray}
   {\text{ Total Forfeited }} & = &  \sum_{j=1}^m  ( 1 - {\bf{1}}^j ) v_i^j ~. \label{total_forfeited}
\end{eqnarray}
Then 
\begin{eqnarray}
    E[ {\text{ Total Tontine Gains }}] & = & \sum_{j=1}^m q_{i-1}^j v_i^j \nonumber \\
    E[ {\text{ Total Forfeited }}  ] & = &  \sum_{j=1}^m q_{i-1}^j v_i^j ~, \label{equal_in_expectation}
\end{eqnarray}
implying that the expected total tontine gain is balanced by the expected total amount forfeited.

In practice, of course, the expected number of deaths in period $(t_{i-1},t_i)$ may not be equal
to the actual number of deaths.  To compensate for this, a practical implementation method
has been suggested in \citep{Sabin_2016,Denuit_2018,Fullmer_2019,Fullmer_2019_a,Winter_2020}.
Denote the
realized group gain at $t_i$ by $G_i$
\begin{eqnarray}
   G_i & = & \frac{ \sum_{j=1}^m  ( 1 - {\bf{1}}^j_i ) v_i^j }
                  { \sum_{j=1}^m {\bf{1}}^j_i c_i^j} ~. \label{group_gain}
\end{eqnarray}
Observe that the values of ${\bf{1}}^j_i$ in equation (\ref{group_gain}) are the
{\em realized} values, not expectations.
The actual tontine gain (mortality credit) $ \hat{c}_i^j$ earned by investor $j$ (assuming investor $j$ is alive at $t_i$) is then
\begin{eqnarray}
   \hat{c}_i^j & = &  G_i \biggl(\frac{ q_{i-1}^j}{ 1 - q_{i-1}^j}  \biggr) v_i^j 
              ~.
                    \label{actual_credit}
\end{eqnarray}
Essentially, we are scaling equation (\ref{tontine_gain}) by the factor $G_i$, which ensures that the
the total amount forfeited by the observed deaths in $(t_{i-1},t_i )$ is exactly equal to the
total mortality credits disbursed to the survivors.
In the following, we will refer to $\hat{c}_i^j$ as the actual mortality gain,
while ${c}_i^j$ will be termed the nominal tontine gain.

\begin{remark}[$\sum_{j=1}^m {\bf{1}}^j_i c_i^j=0$]
Note that equation (\ref{group_gain}) is undefined if all members die in $(t_{i-1},t_i)$.  We assume that the tontine
is large (in terms of members) and
perpetual, i.e. open to new members, so that the probability of all members dying is negligible.  For mathematical
completeness, we can suppose that if all members die in $(t_{i-1},t_i)$, we collapse the tontine,
and distribute all remaining account values $v_i^j$ to the estates of members $j$.
\end{remark}

While use of equation (\ref{actual_credit}) looks like a reasonable approach,
it turns out that this is not strictly fair in the actuarial sense, i.e. there is
some bias that favors some members over others.
Informally, this bias exists since the total amount forfeited in a period depends on the binary
state of each member of the pool, i.e. alive or dead.\footnote{This is illustrated in
\citet{Winter_2020}, using an example with pool consisting of a large number of young investors (with small
individual portfolios), and
a single elderly member with a large portfolio.  The elderly member effectively subsidizes the younger
members.}
\citet{Sabin_2016} show that the bias is negligible under the following conditions

\begin{condition}[Small bias condition] \label{small_bias_assumption}
Suppose that the following conditions hold
\begin{description}
   \item[(a)] the pool of participants in the tontine is sufficiently large;
   \item[(b)] the expected amount forfeited by all members is large compared to any
         member's nominal gain, i.e.
            \begin{eqnarray}
                \biggl( v_i^j \frac{q_{i-1}^j}{1 - q_{i-1}^j} \biggr) \ll \sum_k q_{i-1}^k v_i^k~~~;~~ j=1,\ldots, m ~,
                      \label{diverse_condition}
             \end{eqnarray}
\end{description}
then the bias is negligibly small \citep{Sabin_2016}.
\end{condition}
Condition (\ref{diverse_condition}) is
essentially a diversification requirement: no member of the pool has an abnormally large share of the
total pool capital.  In addition, of course, if the pool is sufficiently large, then  the actual number of deaths
in $(t_{i-1}, t_i)$ will converge to the expected number of deaths.  

Note that it does not matter what investment strategy is followed by any given investor in period $(t_{i-1}, t_i)$.
Each investor can choose whatever policy they like, since only the observed final portfolio value at $t_i$ matters.
Somewhat counterintuitively, $G_i$ is very close to unity,  even if the participants in the tontine
pool are very heterogeneous, i.e. with different ages, genders, invested amounts, and asset allocations
(assuming Condition \ref{small_bias_assumption} holds).

In \citet{Fullmer_2019}, simulations were carried out to determine the magnitude
of the volatility of $G_i$ under practical sizes of tontine pools.
Given a tontine
pool of 15,000 members, with varying ages, initial capital, and randomly assigned investment
policies (i.e. the bond/stock split), the simulations showed that $E[G_i] \simeq 1$ and
that the standard deviation was about $0.1$.  This standard deviation at each $t_i$ actually
resulted in a smaller effect over a  long term (assuming that the tontine member
lived long enough).  This is simply because everybody dies eventually, so that if fewer
deaths than expected are observed in a year, then more deaths will be observed in later years,
and vice versa.

In the following, we will assume that the pool is sufficiently large and that it
satisfies the diversity condition (\ref{diverse_condition}), so that there is 
no error in assuming that $G \equiv 1$, i.e we will assume that the nominal tontine
gain is the actual tontine gain.
As a sanity check, we also carry out a test whereby we simulate the effect of randomly 
varying $G$, based on the statistics of the simulations in \citet{Fullmer_2019}.  Our results
show that effect of randomness in $G$ can be safely ignored for a reasonably sized tontine pool.
To be more precise here, we will modify equation (\ref{tontine_gain_rate_special}) so that
\begin{eqnarray}
   \mathbb{T}_i^g  =  \biggl(\frac{ q_{i-1}}{ 1 - q_{i-1}} \biggr) G_i  \label{tontine_gain_rate_special_plus_G} ~,
\end{eqnarray}
for a numerical example showing the effects of randomness of $G_i$, in  Monte Carlo
simulations.  Our computation of the optimal strategy will always assume $G_i \equiv 1$.

\subsection{Variable withdrawals}
We will allow the individual tontine member to withdraw variable amounts, subject to
minimum and maximum constraints.  We remind the reader that if a tontine pool is strictly
actuarially fair, then, in theory, there are no constraints on withdrawals and injections
of cash \citep{Braughtigam_2017}.  

However, in practice, since pools
are finite sized, heterogeneous, and mortality credits are not
distributed at infinitesimal intervals, we do not allow arbitrarily large
withdrawals.  This avoids moral hazard issues. 

Since we have a minimum withdrawal amount
in each time period, there is a risk of running out of cash.  We assume that if the tontine
member's account becomes negative, than all trading in this account ceases, and debt accumulates at the
borrowing rate.  
In practice, once the tontine account becomes zero,
the retiree has to fund expenses from another source.
We implicitly assume that the tontine member has other assets which
can be used to fund this minimum consumption level (e.g. real estate).  Of course,
we aim to make this a very improbable event.  In fact, this is the reason why
we allow variable withdrawals.   We can regard the upper bound on the withdrawals
as the desired consumption level, but we allow the tontine member to reduce
(hopefully only temporarily) their withdrawals, to minimize risk of depletion
of their tontine account.

\subsection{Money back guarantees}
In practice, we  observe that many tontine funds offer a money back guarantee.\footnote{ {\url{https://qsuper.qld.gov.au/}}}  
This is usually specified
as a return of the initial (nominal) investment less any withdrawals (if the sum is non-negative) at the time of death.
We do not consider such guarantees in this work, focusing on the pure tontine aspect, which has no guarantees,
and presumably the highest possible expected total withdrawals.  A money back guarantee would have to be hedged,
which would reduce returns.  In practice, this guarantee could be priced separately, and added as overlay
to the tontine investment if desired.

\subsection{Survivor Benefits}
Many DB plans have survivor benefits which are received by a surviving spouse.  A typical case would involve
the surviving spouse receiving $60-75\%$ of the yearly pension after the DB plan holder dies.

Consider the following case of a male, same-sex couple, both of whom are exactly the same age.  As an extreme case, suppose
the survivor benefit is 100\% of the tontine cash flows, which continue until the survivor dies.  From the 
CPM2014 table from the Canadian Institute of Actuaries\footnote{\url{www.cia-ica.ca/docs/default-source/2014/214013e.
pdf}},
the probability that an 85-year old Canadian male dies before reaching the age of 86 is about $.076$.  Assuming that the
mortality probabilities are independent for both spouses, then the probability that both 85-year old spouses die
before reaching age $86$, conditional on both living to age $85$ is $ (.076)^2 \simeq  .0053$.  The tontine
gain rate per year (from equation (\ref{tontine_gain_rate})) is
\begin{eqnarray}
   {\mbox{ tontine gain rate }} & = & \frac{ .0053}{ 1 - .0053} \simeq .0053 ~.
\end{eqnarray}
We will assume in our numerical examples that the base case fee charged for managing the tontine is 
$50$ bps per year.  This means that, net of fees, there are essentially no tontine gains for our hypothetical couple,
for the first 20 years of retirement, which is surely undesirable.  Once one of the partners passes away, the tontine gain rate will,
of course, take a jump in value.

As another extreme case, suppose that the surviving spouse receives $50\%$ of the tontine cash flows.  In this case, 
the total cash flows accruing this couple are exactly the same as dividing the original total wealth into two,
and then having each spouse invest in their own individual tontine.

It is possible to determine the distribution of the cash flows for a survivor benefit which is intermediate
to these edge cases.   However, 
this requires additional state variables in our optimal control problem,
and is probably best tackled using a machine learning approach \citep{yuying_2019,Ni_2022}.
We will leave this case for future work,
and focus attention on the individual tontine case, with no survivor benefit.
Note that in the tontine context, survivor benefits are typically provided by a  separate insurance 
overlay.\footnote{\url{https://i3-invest.com/2021/04/behind-qsupers-retirement-design/}}

\section{Formulation}
We assume that the investor has access to two funds: a broad market stock index fund
and a constant maturity bond index fund.  

The investment horizon is $T$.  Let $S_t$ and $B_t$ respectively denote the 
real (inflation adjusted) \emph{amounts} invested in the
stock index and the bond index respectively.  In general, these
amounts will depend on the investor's strategy over time,
as well as changes
in the real unit prices of the assets.
In the absence of an investor determined
control (i.e. cash withdrawals or rebalancing),
all changes in $S_t$
and $B_t$ result from changes in asset prices. We model the stock index as following
a jump diffusion.  

In addition, we follow the usual practitioner approach and directly model
the returns of the constant maturity bond index as a stochastic process,
see for example \citet{Lin_2015,mitchell_2014}.   
As in \citet{mitchell_2014}, we assume that the constant maturity bond
index follows a jump diffusion process as well.

Let $ S_{t^-} =   S(t - \epsilon), 
\epsilon \rightarrow 0^+$, i.e.\ $t^-$ is the instant of time before
$t$, and let $\xi^s$ be a random number representing a jump multiplier.
When a jump occurs, $S_t = \xi^s S_{t^-}$. 
Allowing for jumps permits modelling of non-normal asset returns.
We assume that $\log(\xi^s)$ follows a double exponential distribution
\citep{kou:2002,Kou2004}. If a jump occurs, $u^s$ is
the probability of an upward jump, while $1-u^s$ is the
chance of a downward jump. The density function for $y = \log (\xi^s)$ is
{\color{black}
\begin{linenomath*}
\begin{equation}
f^s(y) = u^s \eta_1^s e^{-\eta_1^s y} {\bf{1}}_{y \geq 0} +
       (1-u^s) \eta_2^s e^{\eta_2^s y} {\bf{1}}_{y < 0}~.
\label{eq:dist_stock}
\end{equation}
\end{linenomath*}
}
We also define
{\color{black}
\begin{linenomath*}
\begin{eqnarray}
\gamma^s_{\xi} &= &E[ \xi^s -1 ]
          =   \frac{u^s \eta_1^s}{\eta_1^s - 1} + 
          \frac{ ( 1 - u^s ) \eta_2^s }{\eta_2^s + 1}  -1 ~.
\end{eqnarray}
\end{linenomath*}
}
In the absence of control, $S_t$ evolves according to
\begin{eqnarray}
\frac{dS_t}{S_{t^-}} &= &\left(\mu^s -\lambda_\xi^s \gamma_{\xi}^s \right) \, dt + 
  \sigma^s \, d Z^s +  d\left( \ds \sum_{i=1}^{\pi_t^s} (\xi_i^s -1) \right) ,
\label{jump_process_stock}
\end{eqnarray}
where $\mu^s$ is the (uncompensated) drift rate, $\sigma^s$ is the volatility,
$d Z^s$ is the increment of a Wiener process,
$\pi_t^s$ is a Poisson process with positive intensity parameter
$\lambda_\xi^s$, and $\xi_i^s$ are i.i.d.\ positive random variables having
distribution (\ref{eq:dist_stock}).
Moreover, $\xi_i^s$, $\pi_t^s$, and $Z^s$ are assumed to all be
mutually independent.

Similarly,  let the amount in the bond index be $B_{t^-} =  B(t - \epsilon), \epsilon \rightarrow 0^+$.
In the absence of control, $B_t$ evolves as
\begin{eqnarray}
\frac{dB_t}{B_{t^-}} &= &\left(\mu^b -\lambda_\xi^b \gamma_{\xi}^b  
   + \mu_c^b {\bf{1}}_{\{B_{t^-} < 0\}}  \right) \, dt + 
  \sigma^b \, d Z^b +  d\left( \ds \sum_{i=1}^{\pi_t^b} (\xi_i^b -1) \right) ,
\label{jump_process_bond}
\end{eqnarray}
where the terms in equation (\ref{jump_process_bond}) are defined analogously to
equation (\ref{jump_process_stock}).  In particular, $\pi_t^b$ 
is a Poisson process with positive intensity parameter
$\lambda_\xi^b$, and $\xi_i^b$ has distribution 
\begin{linenomath*}
\begin{equation}
f^b( y= \log \xi^b) = u^b \eta_1^b e^{-\eta_1^b y} {\bf{1}}_{y \geq 0} +
       (1-u^b) \eta_2^b e^{\eta_2^b y} {\bf{1}}_{y < 0}~,
\label{eq:dist_bond}
\end{equation}
\end{linenomath*}
and $\gamma_{\xi}^b = E[ \xi^b -1 ]$.  $\xi_i^b$, $\pi_t^b$, and $Z^b$ are assumed to all be
mutually independent.  The term $\mu_c^b {\bf{1}}_{\{B_{t^-} < 0\}}$ in equation 
(\ref{jump_process_bond}) represents the extra cost of borrowing (the spread).

The diffusion processes are correlated, i.e. $d Z^s \cdot d Z^b = \rho_{sb} dt$.  The stock
and bond jump processes are assumed mutually independent.
See \citet{forsyth_2020_a} for justification of the assumption of stock-bond jump independence.

We define the investor's total wealth at time $t$ as
\begin{equation}
\text{Total wealth } \equiv W_t = S_t + B_t.
\end{equation}
We impose the constraints that (assuming solvency) shorting stock and using leverage
(i.e.\ borrowing) are not permitted.
In the event of insolvency (due to withdrawals), the portfolio
is liquidated, trading ceases and debt accumulates
at the borrowing rate.

\section{Notational Conventions}
\label{adaptive_section}
Consider a set of discrete withdrawal/rebalancing times $\mathcal{T}$
\begin{eqnarray}
   \mathcal{T} = \{t_0=0 <t_1 <t_2< \ldots <t_M=T\}  \label{T_def}
\end{eqnarray}
where we assume that $t_i - t_{i-1} = \Delta t =T/M$ is constant for
simplicity.
To avoid subscript clutter, in the following, we will
occasionally use the notation $S_t \equiv S(t), B_t \equiv B(t)$ and
$W_t \equiv W(t)$.
Let the inception time of the investment be $t_0 = 0$. We let
$\mathcal{T} $ be the set of
withdrawal/rebalancing times, as defined in equation (\ref{T_def}).
At each rebalancing time
$t_i$, $i = 0, 1, \ldots, M-1$, the investor (i)~withdraws an amount of cash
$\qq_i$ from the portfolio, and then~(ii) rebalances the portfolio. At
$t_M = T$,  the portfolio is liquidated and no cash flow occurs.  For notational completeness,
this is enforced by specifying $\qq_M = 0$.

In the following, given a time dependent function $f(t)$, then we will use
the shorthand notation
\begin{eqnarray}
  f(t_i^+) \equiv \displaystyle  \lim_{\epsilon \rightarrow 0^+}
          f(t_i + \epsilon) ~~&; & ~~
      f(t_i^-) \equiv \displaystyle  \lim_{\epsilon \rightarrow 0^+}
          f(t_i - \epsilon)  ~~.
\end{eqnarray}
Let 
\begin{eqnarray}
   (\Delta t)_i & = \begin{cases}
                    \Delta t &  i = 1, \ldots M, \\
                     0    &         i=0
                \end{cases} ~~.
\end{eqnarray}
We assume that a tontine fee of $\mathbb{T}^f$ per unit time is charged at $t \in \mathcal{T}$,
based on the total portfolio value at $t_i^-$,
after tontine gains but before withdrawals.
Recalling the definition of tontine gain rate $\mathbb{T}^g_i$ in equation (\ref{tontine_gain_rate_special}),
we modify this definition to enforce no tontine gain at $t=0$,
\begin{eqnarray}
   \mathbb{T}_i^g  =  \begin{cases}
                       \biggl(\frac{ q_{i-1}}{ 1 - q_{i-1}} \biggr) & i=1, \ldots, M \\
                       0 & i = 0\\
                     \end{cases} ~.
         \label{T_g_modified}
\end{eqnarray}
Then, $W(t_i^+)$ is given by
\begin{eqnarray}
   W(t_i^+) = \biggl( S(t^-_i) + B(t_i^-) \biggr) \biggl( 1 + \mathbb{T}^g_i \biggr)  \exp( -(\Delta t)_i  \mathbb{T}^f ) -\mathfrak{q}_i  ~;~ i \in \mathcal{T}
   ~,  \label{W_plus_def}
\end{eqnarray}
where we recall that  $\mathfrak{q}_M \equiv 0$ and $ (\Delta t)_0 \equiv 0$.  
With some abuse of notation, we define
\begin{eqnarray}
   W(t_i^-) & = & \biggl( S(t^-_i) + B(t_i^-) \biggr) 
                \biggl( 1 + \mathbb{T}^g_i \biggr)  \exp( -(\Delta t)_i \mathbb{T}^f )
     \label{W_minus_def}
\end{eqnarray}
as the total portfolio value, after tontine gains and tontine fees, the instant before withdrawals and
rebalancing at $t_i$.

Typically, DC plan savings are held in a tax advantaged
account, with no taxes triggered by rebalancing.  With infrequent (e.g. yearly) rebalancing, we also
expect other transaction costs, apart from the tontine fees, to be small, and hence can be ignored.  It is possible to include
transaction costs, but at the expense of increased computational cost \citep{Van2018}.

We denote by
$X\left(t\right)=
\left( S \left( t \right), B\left( t \right) \right)$,
$t\in\left[0,T\right]$, the multi-dimensional controlled
underlying process, and by $x = (s, b)$
the realized state of the system.
Let the rebalancing control $\pp_i(\cdot)$ be  the fraction invested in the stock index
at the rebalancing date $t_i$, i.e.
\begin{eqnarray}
    \pp_i \left( X(t_i^-) \right) = 
       \pp \left( X(t_i^-), t_i \right) & = & \frac{ S(t_i^+)} {S(t_i^+) + B(t_i^+) } ~.
\end{eqnarray}

Let the withdrawal control $\qq_i(\cdot)$ be the amount withdrawn at time $t_i$, i.e.
$\qq_i \left( X(t_i^-) \right)  = \qq  \left( X(t_i^-), t_i \right)$.
Formally,  the controls depend on the state
of the investment portfolio, before the rebalancing
occurs, i.e.
$\pp_i(\cdot) =  \pp\left(X(t_i^-), t_i)\right) 
= \pp\left(X_i^-, t_i \right)$, and
$\qq_i(\cdot) =  \qq\left(X(t_i^-), t_i)\right) 
= \qq\left(X_i^-, t_i \right)$,
$t_i \in \mathcal{T}$, where $\mathcal{T}$ is the
set of rebalancing times.

However, it will be convenient to note that
in our case, we find the optimal control $\pp_i(\cdot)$
amongst all strategies with constant wealth (after withdrawal of cash).
Hence, with some abuse of notation, we will now consider
$\pp_i(\cdot)$ to be function of wealth after withdrawal of cash
\begin{linenomath*}
\begin{eqnarray}
  \pp_i(\cdot) &= & \pp(W(t_i^+), t_i) \nonumber \\
       W(t_i^+) &= & W(t_i^-) - \qq_i(\cdot) \nonumber \\
        W(t_i^-) &=& \biggl( S(t^-_i) + B(t_i^-) \biggr) \biggl( 1 + \mathbb{T}^g_i \biggr)  \exp( -(\Delta t)_i  \mathbb{T}^f )
                                       \nonumber \\
        S(t_i^+) &= &S_i^+ = \pp_i(W_i^+)~ W_i^+ \nonumber \\
 B(t_i^+) & = &B_i^+ =   (1 -\pp_i(W_i^+)) ~W_i^+
       ~~.  \label{p_def_2}
\end{eqnarray}
\end{linenomath*}
Note that the control for $\pp_i(\cdot)$ depends only $W_i^+$.
Since $\pp_i(\cdot) = \pp_i(W_i^- - \qq_i)$, then it follows that
\begin{eqnarray}
  \qq_i(\cdot) = \qq_i(W_i^-) \label{q_dependence}~,
\end{eqnarray}
which we discuss further in Section \ref{DP_program}.

A control at time $t_i$, is then given by the pair $( \qq_i(\cdot), \pp_i(\cdot) )$ where the notation $(\cdot)$
denotes that the control is a function of the state.

Let $\mathcal{Z}$ represent the set of admissible
values of the controls $(\qq_i(\cdot), \pp_i(\cdot))$.
We impose no-shorting, no-leverage
constraints (assuming solvency).  We also impose maximum and minimum values for the withdrawals.
We apply the constraint that in the event of insolvency due to withdrawals ($W(t_i^+) < 0$),
trading ceases and debt (negative wealth) accumulates at the appropriate
borrowing rate of return (i.e. a  spread over the bond rate).  
We also specify that the stock assets are liquidated at $t=t_M$.

More precisely, let $W_i^+$ be the wealth after withdrawal of cash, and $W_i^-$ be the total wealth
before withdrawals (but after fees and tontine cash flows), then
define
\begin{eqnarray}
 \mathcal{Z}_{\qq}  & = &
              \begin{cases}
                  [\qq_{\min},  \qq_{\max}  ] &   t \in \mathcal{T} ~;~ t \neq t_M~;~ W_i^- \geq \qq_{\max} \\
                  [\qq_{\min},  \max(\qq_{\min}, W_i^- )    ] &   t \in \mathcal{T} ~;~ t \neq t_M ~;~ W_i^- < \qq_{\max} \\
                  \{0\}   &    t = t_M
             \end{cases}
                      ~,  \label{Z_q_def}\\
    \mathcal{Z}_\pp (W_i^+,t_i) &=&
          \begin{cases}
                  [0,1] & W_i^+ > 0 ~;~ t_i \in \mathcal{T}~;~ t_i \neq t_M \\
                  \{0\} & W_i^+ \leq 0 ~;~ t_i \in \mathcal{T}~;~  t_i \neq t_M \\
                  \{0\} &  t_i=t_M
          \end{cases}    ~.  \label{Z_p_def} \\
\end{eqnarray}
The rather complicated expression in equation (\ref{Z_q_def}) imposes the assumption that, as wealth
becomes small, the retiree (i) tries to avoid insolvency as much as possible and (ii) in the event of
insolvency, withdraws only $\qq_{\min}$.

The set of admissible values for $(\qq_i,\pp_i), t_i \in \mathcal{T}$,
can then be written as
\begin{eqnarray}
   (\qq_i, \pp_i) \in \mathcal{Z}(W_i^-, W_i^+,t_i) & = & \mathcal{Z}_{\qq}(W_i^-, t_i) \times \mathcal{Z}_{\pp} (W_i^+,t_i)~.
  \label{admiss_set}
\end{eqnarray}
For implementation purposes, we have written equation (\ref{admiss_set}) in terms of the wealth after
withdrawal of cash.  However, we remind the reader that since $W_i^+ = W_i^- -\qq_i$, the controls
are formally a function of the state $X(t_i^-)$ before the control is applied.

The admissible control set $\mathcal{A}$ can then be written as
\begin{eqnarray}
  \mathcal{A} = \biggl\{
                  (\qq_i, \pp_i)_{0 \leq i \leq M} : (\pp_i, \qq_i) \in \mathcal{Z}(W_i^-, W_i^+,t_i) 
                \biggr\}
\end{eqnarray}
An admissible control $\mathcal{P} \in \mathcal{A}$, where $\mathcal{A}$ is
the admissible control set, can be written as,
\begin{eqnarray}
    \mathcal{P} = \{ (\qq_i(\cdot), \pp_i(\cdot) ) ~:~ i=0, \ldots, M \} ~.
\end{eqnarray}
We also define $\mathcal{P}_n \equiv \mathcal{P}_{t_n} \subset \mathcal{P}$
as the tail of the set of controls in $[t_n, t_{n+1}, \ldots, t_{M}]$, i.e.
\begin{eqnarray}
   \mathcal{P}_n =\{ (   \qq_n(\cdot) , \pp_n(\cdot) ),  \ldots, 
                           (\qq_{M}(\cdot) , \pp_{M}(\cdot) ) \} ~.
\end{eqnarray}
For notational completeness, we also define the tail of the admissible control set $\mathcal{A}_n$ as
\begin{eqnarray}
   \mathcal{A}_n = \biggl\{
                  ( \qq_i, \pp_i )_{n \leq i \leq M} : (\qq_i, \pp_i) \in \mathcal{Z}(W_i^-, W_i^+,t_i) 
                \biggr\}
\end{eqnarray}
so that $\mathcal{P}_n \in \mathcal{A}_n $.

\section{Risk and Reward}
\subsection{Risk: Definition of Expected Shortfall (ES)}
Let $g(W_T)$ be the probability density function of wealth $W_T$ at $t=T$.
Suppose
\begin{linenomath*}
\begin{equation}
\int_{-\infty}^{W^*_{\alpha}}  g(W_T) ~dW_T = \alpha,
    \label{CVAR_def_a}
\end{equation}
\end{linenomath*}
i.e.\ \emph{Pr}$[W_T >  W^*_{\alpha}] = 1 - \alpha$.  We can interpret
$W^*_{\alpha}$ as the Value at Risk (VAR) at level $\alpha$\footnote{In practice, the negative of $W^*_{\alpha}$ is often the
reported VAR.}.
The Expected Shortfall (ES) at level $\alpha$ is then
\begin{linenomath*}
\begin{equation}
{\text{ES}}_{\alpha} = \frac{\int_{-\infty}^{W^*_{\alpha}} W_T ~  g(W_T) ~dW_T }
                      {\alpha},
\label{ES_def_1}
\end{equation}
\end{linenomath*}
which is the mean of the worst $\alpha$ fraction of outcomes.
Typically $\alpha \in \{ .01, .05 \}$. The definition of ES in
equation \eqref{ES_def_1} uses the probability density of the final
wealth distribution, not the density of \emph{loss}. Hence, in our case,
a larger value of ES (i.e.\ a larger value of average worst case terminal
wealth) is desired.  The negative of ES is commonly referred to as Conditional
Value at Risk (CVAR).

Define $X_0^+ = X(t_0^+), X_0^- = X(t_0^-)$.  Given an expectation under control $\mathcal{P}$, $E_{\mathcal{P}}
[\cdot]$, as noted by \citet{Uryasev_2000},
$ \text{ES}_{\alpha}$ can be alternatively written as
\begin{eqnarray}
  {\text{ES}}_{\alpha}( X_0^-, t_0^-)  & = & 
      \sup_{W^*} E_{\mathcal{P}_0}^{X_0^+, t_0^+}
   \biggl[W^* + \frac{1}{\alpha} \min(  W_T - W^* , 0 )
                       \biggr]  ~.
\label{ES_def}
\end{eqnarray}
The admissible set for $W^*$ in equation (\ref{ES_def}) is over
the set of possible values for $W_T$.  

The notation ${\text{ES}}_{\alpha}( X_0^-, t_0^-) $
emphasizes that  ${\text{ES}}_{\alpha}$ is as seen at  $(X_0^-, t_0^-)$.  In other words,
this is the pre-commitment ${\text{ES}}_{\alpha}$.  A strategy based purely on optimizing
the pre-commitment value of ${\text{ES}}_{\alpha}$ at time zero  is {\em time-inconsistent},
hence has been termed by many as {\em non-implementable}, since the investor
has an incentive to deviate from the time zero pre-commitment strategy at $t >0$.
However, in the following, we will consider the pre-commitment strategy merely
as a device to determine an appropriate level of $W^*$ in equation (\ref{ES_def}).
If we fix $W^*$ $\forall t >0$, then this strategy is the induced time consistent
strategy \citep{Strub_2019_a}, hence is implementable.  We delay further discussion of this subtle point
to Appendix \ref{appendix_induced}.

\subsection{A measure of reward: expected total withdrawals (EW)}
We will use expected total withdrawals as a measure of reward in the following. More precisely, we define EW (expected withdrawals) as
\begin{eqnarray}
      {\text{EW}}( X_0^-, t_0^-) = E_{\mathcal{P}_0}^{X_0^+, t_0^+} 
                                 \biggl[ 
                                    \sum_{i=0}^{M} q_i
                                 \biggr] ~.  \label{EW_def}
\end{eqnarray}
Note that there is no discounting term in equation (\ref{EW_def}) (recall that all
quantities are real, i.e. inflation adjusted). It is straightforward to
introduce discounting, but we view setting the real discount rate to zero to
be a reasonable and conservative choice.  See \citet{Forsyth_2021_b} for further comments.

\section{Problem EW-ES}
Since expected withdrawals (EW) and expected shortfall (ES) are conflicting measures,
we use a scalarization technique to find the Pareto points for this multi-objective
optimization problem.  Informally, for a given scalarization parameter $\kappa >0$,
we seek to find the control $\mathcal{P}_0$ that maximizes
\begin{eqnarray}
 {\text{EW}}( X_0^-, t_0^-)  + \kappa~ {\text{ES}}_{\alpha}( X_0^-, t_0^-)
     ~.  \label{objective_overview}
\end{eqnarray}
More precisely, we define the pre-commitment EW-ES problem $(PCES_{t_0}(\kappa))$
problem in terms of the value function $J(s,b,t_0^-)$ 

\begin{eqnarray}
\left(\mathit{PCES_{t_0}}\left(\kappa \right)\right):
    \qquad J\left(s,b,  t_{0}^-\right)  
    & = & \sup_{\mathcal{P}_{0}\in\mathcal{A}}
          \sup_{W^*}
        \Biggl\{
               E_{\mathcal{P}_{0}}^{X_0^+,t_{0}^+}
           \Biggl[ ~\sum_{i=0}^{M} \qq_i ~  + ~
              \kappa \biggl( W^* + \frac{1}{\alpha} \min (W_T -W^*, 0) \biggr)
                    \Biggr. \Biggr. \nonumber \\
         & &  \Biggl. \Biggl. ~~~~~
               + \epsilon W_T 
                \bigg\vert X(t_0^-) = (s,b)
                   ~\Biggr] \Biggr\}\label{PCES_a}\\
    \text{ subject to } & &
               \begin{cases}
(S_t, B_t) \text{ follow processes \eqref{jump_process_stock} and \eqref{jump_process_bond}};  
 ~~t \notin \mathcal{T} \\
      W_{\ell}^+ = W_{\ell}^{-}  - \qq_\ell \,; ~ X_\ell^+ = (S_\ell^+ , B_\ell^+)  \\
      W_{\ell}^{-} = \biggl( S(t^-_i) + B(t_i^-) \biggr) \biggl( 1 + \mathbb{T}^g_i \biggr)  
                        \exp( -(\Delta t)_i  \mathbb{T}^f )\\
   S_\ell^+ = \pp_\ell(\cdot) W_\ell^+ \,; 
 ~B_\ell^+ = (1 - \pp_\ell(\cdot) ) W_\ell^+ \, \\
    ( \qq_\ell(\cdot) , \pp_\ell(\cdot) )  \in \mathcal{Z}(W_\ell^-, W_\ell^+,t_\ell)  \\
    \ell = 0, \ldots, M ~;~ t_\ell \in \mathcal{T}  \\
               \end{cases}  ~.
\label{PCES_b}
\end{eqnarray}

Note that we have added the extra term $E_{\mathcal{P}_{0}}^{X_0^+,t_{0}^+}[ \epsilon W_T ]$ to
equation (\ref{PCES_a}).  If we have a maximum withdrawal constraint, and if $W_t \gg W^*$ as
$t \rightarrow T$, the controls become ill-posed.  In this fortunate state for the investor,
we can break the investment policy ties by setting $\epsilon < 0$, which will force investments
in bonds, while if $\epsilon > 0$, then this will force investments into stocks.   Choosing
$ |\epsilon| \ll 1$ ensures that this term only has an effect if $W_t \gg W^*$ and $t \rightarrow T$.
See \citet{Forsyth_2021_b} for more discussion of this.

Interchange the $\sup \sup (\cdot) $ in equation (\ref{PCES_a}), so that
value function $ J\left(s,b,  t_{0}^-\right)$ can be
written as
\begin{eqnarray}
\qquad J\left(s,b,  t_{0}^-\right)
     & = &  \sup_{W^*} \sup_{\mathcal{P}_{0}\in\mathcal{A}}
              \Biggl\{
               E_{\mathcal{P}_{0}}^{X_0^+,t_{0}^+}
           \Biggl[ ~\sum_{i=0}^{M} \mathfrak{q}_i ~  + ~
              \kappa \biggl( W^* + \frac{1}{\alpha} \min (W_T -W^*, 0) \biggr)
                 + \epsilon W_T \bigg\vert X(t_0^-) = (s,b)
                   ~\Biggr] \Biggr\} ~.  \nonumber \\
             \label{pcee_inter}
\end{eqnarray}
Noting that the inner supremum in equation (\ref{pcee_inter}) is a continuous
function of $W^*$, and noting that the optimal value of $W^*$ in equation (\ref{pcee_inter})
is bounded\footnote{This is the same as noting that a finite value at risk exists.  This easily
shown,  assuming $ 0 <\alpha < 1$, since our investment strategy uses no leverage and no-shorting.}, then define
\begin{eqnarray}
\mathcal{W}^*(s,b)  & = & \displaystyle \argmax_{W^*} \biggl\{
                      \sup_{\mathcal{P}_{0}\in\mathcal{A}} 
            \Biggl\{
               E_{\mathcal{P}_{0}}^{X_0^+,t_{0}^+}
           \Biggl[ ~\sum_{i=0}^{M} \mathfrak{q}_i ~  + ~
              \kappa \biggl( W^* + \frac{1}{\alpha} \min (W_T -W^*, 0) \biggr) 
                    \biggr. \biggr. \nonumber \\
               & &  \biggl. \biggl. ~~~~~~~~~~~~ + \epsilon W_T ~ \bigg\vert X(t_0^-) = (s,b)
                   ~\Biggr] \Biggr\} ~.
             \nonumber \\
            \label{pcee_argmax}
\end{eqnarray}
We refer the reader to \citet{forsyth_2019_c} for an extensive discussion
concerning pre-commitment and time consistent ES strategies.  We summarize
the relevant results from that discussion in Appendix \ref{appendix_induced}.

\section{Formulation as a Dynamic Program}\label{DP_program}

We use the method in \citet{forsyth_2019_c} to solve problem (\ref{pcee_inter}).
We write equation (\ref{pcee_inter}) as
\begin{eqnarray}
   J(s,b,t_0^-) & = & \sup_{W^*} V(s,b,W^*, 0^-)~,
\end{eqnarray}
where the auxiliary function $V(s, b, W^*, t)$
is defined as
\begin{eqnarray}
   V(s, b, W^*, t_n^-) & = & \sup_{\mathcal{P}_{n}\in\mathcal{A}_n}
        \Biggl\{
               E_{\mathcal{P}_{n}}^{\hat{X}_n^+,t_{n}^+}
           \Biggl[
                \sum_{i=n}^{M} \qq_i +
           \kappa
             \biggl(
                  W^* + \frac{1}{\alpha} \min((W_T -W^*),0) 
               \biggr) + \epsilon W_T ~
              \bigg\vert \hat{X}(t_n^-) = (s,b, W^*)  \Biggr]
                   \Biggr\}~. \nonumber \\
                    ~ \label{expanded_1} \\
     \text{ subject to } & &
  \begin{cases}
(S_t, B_t) \text{ follow processes \eqref{jump_process_stock} and \eqref{jump_process_bond}};  
 ~~t \notin \mathcal{T} \\
      W_{\ell}^+ = W_{\ell}^-  - \qq_\ell \,; ~ X_\ell^+ = (S_\ell^+ , B_\ell^+, W^*)  \\
       W_{\ell}^- = \biggl( S(t^-_i) + B(t_i^-) \biggr) \biggl( 1 + \mathbb{T}^g_i \biggr)  
                           \exp( -(\Delta t)_i  \mathbb{T}^f )\\
   S_\ell^+ = \pp_\ell(\cdot) W_\ell^+ \,; 
 ~B_\ell^+ = (1 - \pp_\ell(\cdot) ) W_\ell^+ \, \\
    ( \qq_\ell(\cdot), \pp_\ell(\cdot) )  \in \mathcal{Z}(W_\ell^-, W_\ell^+ ,t_\ell)   \\
    \ell = n, \ldots, M ~;~ t_\ell \in \mathcal{T}  \\
               \end{cases}  ~.
             \label{expanded_2}
\end{eqnarray}

We have now decomposed the original problem (\ref{pcee_inter}) into two steps
\begin{itemize}
   \item For given initial cash $W_0$, and a fixed value of $W^*$, solve problem (\ref{expanded_1})
         using dynamic programming
           to determine $V(0,W_0, W^*, 0^-)$.
   \item Solve problem (\ref{pcee_inter}) by maximizing over $W^*$
         \begin{eqnarray}
      J(0,W_0, 0^-) & = & \sup_{W^*} V(0,W_0, W^*, 0^-) ~.
              \label{final_step_EWES}
       \end{eqnarray}
\end{itemize}

\subsection{Dynamic Programming Solution of Problem (\ref{expanded_1})}
We give a brief overview of the method described in detail in
\citep{Forsyth_2021_b}.
Apply the dynamic programming principle to $t_n \in \mathcal{T}$
\begin{eqnarray}
     V(s,b,W^*, t_n^-) & = & 
                           \sup_{\qq \in \mathcal{Z}_\qq(w^-,t_n) } \biggl\{  ~~~\sup_{\pp \in \mathcal{Z}_\pp(w^- - q, t_n) } 
                         \biggl[  \qq+
                                    V(  (w^- -\qq) \pp ,  (w^- - \qq) (1-\pp) , W^*, t_n^+)
                                     \biggr]
                                                 \biggr\} \nonumber \\
                       & = & \sup_{ \qq \in Z_\qq(w^-, t_n)}  \biggl\{ \qq + 
                                   \biggl[ \sup_{ \pp \in \mathcal{Z}_\pp( w^- - q, t_n)}
                                    V(  (w^- -\qq) \pp ,  (w^- - \qq) (1-\pp) , W^*, t_n^+)
                                     \biggr]
                                                 \biggr\} \nonumber \\
                        & & w^- = (s+b) \biggl( 1 + \mathbb{T}^g_i \biggr)  
                           \exp( -(\Delta t)_i  \mathbb{T}^f )  ~ . \label{dynamic_a}
\end{eqnarray}
For computational purposes, we define
\begin{eqnarray}
   \tilde{V}( w, t_n, W^*) & = & \biggl[ \sup_{ \pp \in \mathcal{Z}_\pp( w, t_n)}
                                    V(  w \pp ,  w (1-\pp) , W^*, t_n^+)
                                     \biggr] ~. \label{dynamic_b}
\end{eqnarray}
Equation (\ref{dynamic_a}) now becomes
\begin{eqnarray}
     V(s,b,W^*, t_n^-) & = & 
                           \sup_{\qq \in \mathcal{Z}_\qq(w^-,t_n)}
                               \biggl\{ \qq+
                              \biggl[ 
                                    \tilde{V}(  (w^- -\qq) , W^*, t_n^+)
                                     \biggr]
                              \biggr\} \nonumber \\
               & &      w^- = (s+b) \biggl( 1 + \mathbb{T}^g_i \biggr)  
                           \exp( -(\Delta t)_i  \mathbb{T}^f )  ~ .
     \label{dynamic_c}
\end{eqnarray}
This approach effectively replaces a two dimensional optimization for $(\qq_n, \pp_n)$,
to two sequential one dimensional optimizations.
From equations (\ref{dynamic_b}-\ref{dynamic_c}), it is clear that the optimal pair $(\qq_n, \pp_n)$
is such that 
\begin{eqnarray}
  \qq_n & =  & \qq_n(  w^-  , W^*) \nonumber \\
     & & w^- = (s+b) \biggl( 1 + \mathbb{T}^g_i \biggr)  
                           \exp( -(\Delta t)_i  \mathbb{T}^f ) \nonumber \\
     \pp_n &= &\pp_n( w, W^*) \nonumber \\ 
          & &  w = w^- - \qq_n ~. \label{dynamic_d}
\end{eqnarray}
In other words, the optimal withdrawal control $ \qq_n$ is only a function of total wealth (after tontine gains and fees)
before withdrawals.
The optimal control $\pp_n$ is a function only of total wealth
after withdrawals, tontine gains, and fees.

At $t=T$, we have
\begin{eqnarray}
V(s, b, W^*,T^+) & = &  \kappa \biggl( 
                               W^* + \frac{\min( (s+b -W^*), 0 )}{\alpha} 
                               \biggr) + \epsilon (s+b) ~. \label{dynamic_b_1}
\end{eqnarray}
At points in between rebalancing times, i.e. $t \notin \mathcal{T}$, the usual arguments (from
SDEs (\ref{jump_process_stock}-\ref{jump_process_bond}), and  
\citet{Forsyth_2021_b}) give
\begin{eqnarray}
  & & V_t +  \frac{ (\sigma^s)^2 s^2}{2} V_{ss} +( \mu^s - \lambda_{\xi}^s \gamma_{\xi}^s) s V_s
       + \lambda_{\xi}^s \int_{-\infty}^{+\infty} V( e^ys, b, t) f^s(y)~dy 
      + \frac{ (\sigma^b)^2 b^2}{2} V_{bb} 
                  \nonumber \\
     & & ~~~     + ( \mu^b + \mu_c^b {\bf{1}}_{ \{ b < 0 \} } - \lambda_{\xi}^b \gamma_{\xi}^b) b V_b       
                  + \lambda_{\xi}^b \int_{-\infty}^{+\infty} V( s, e^yb, t) f^b(y)~dy 
       -( \lambda_{\xi}^s + \lambda_{\xi}^b )  V + \rho_{sb} \sigma^s \sigma^b s b V_{sb} 
       =0  ~,  \nonumber \\
  & & ~~~~~~~~~~~~~~~~~~~~~~~~~~~~~~~s \ge 0 ~; b \ge 0~~ ~.
      \label{expanded_7}
\end{eqnarray}
In case of insolvency\footnote{Insolvency can only occur due to the minimum withdrawals specified.} $s = 0, b < 0$
\begin{eqnarray}
   & & V_t +  
       \frac{ (\sigma^b)^2 b^2}{2} V_{bb} 
       + ( \mu^b + \mu_c^b {\bf{1}}_{ \{ b < 0 \} } - \lambda_{\xi}^b \gamma_{\xi}^b) b V_b       
                  + \lambda_{\xi}^b \int_{-\infty}^{+\infty} V( 0, e^yb, t) f^b(y)~dy 
       - \lambda_{\xi}^b V  
       =0 ~,\nonumber \\ 
  & & ~~~~~~~~~~~~~~~~~~~~~~~~s = 0 ~; b < 0~~ ~.
      \label{expanded_8}
\end{eqnarray}

It will be convenient, for computational purposes, to re-write equation (\ref{expanded_8}) in terms of
debt $\hat{b} = -b$ when $b<0$.
Now let $\hat{V} (\hat{b}, t) = V(0, b,t), b < 0, b = -\hat{b}$ in equation (\ref{expanded_8}) to give
\begin{eqnarray}
 & & \hat{V}_t +  
       \frac{ (\sigma^b)^2 \hat{b}^2}{2} \hat{V}_{\hat{b} \hat{b}} 
       + ( \mu^b + \mu_c^b  - \lambda_{\xi}^b \gamma_{\xi}^b)  \hat{b} \hat{V}_{\hat{b}}       
                  + \lambda_{\xi}^b \int_{-\infty}^{+\infty} \hat{V}(  e^y \hat{b}, t) f^b(y)~dy 
       - \lambda_{\xi}^b \hat{V}  
       =0 ~,\nonumber \\ 
  & & ~~~~~~~~~~~~~~~~~~~~~~~~s = 0 ~;~b < 0~;~  \hat{b}  = -b ~.
      \label{minus_b_2}
\end{eqnarray}
Note that equation (\ref{minus_b_2}) is now amenable to a transformation of the form $\hat{x} = \log \hat{b}$
since $\hat{b} > 0$, which is required when using a Fourier method \citep{forsythlabahn2017,Forsyth_2021_b}
to solve equation (\ref{minus_b_2}).

After rebalancing, if $b \ge 0$, then $b$ cannot become negative, since $b=0$ is a barrier in
equation (\ref{expanded_8}).  However, $b$ can become negative after withdrawals, in which case
$b$ remains in the state $b<0$, where equation (\ref{minus_b_2}) applies, unless there is an 
injection of cash to move to a state with $b>0$.
The terminal condition for equation (\ref{minus_b_2}) is
\begin{eqnarray}
\hat{V}(\hat{b}, W^*,T^+) & = &  \kappa \biggl( 
                               W^* + \frac{\min( (-\hat{b} -W^*), 0 )}{\alpha} 
                               \biggr) + \epsilon (-\hat{b})  ~;~ \hat{b} > 0~. \label{minus_b_3} 
\end{eqnarray}

A brief overview of the numerical algorithms is given in Appendix \ref{Numerical_Appendix}, along
with a numerical convergence verification.

\section{Data}\label{data_section}

We use data from the Center for
Research in Security Prices (CRSP) on a monthly basis over the
1926:1-2020:12 period.\footnote{More specifically, results presented here
were calculated based on data from Historical Indexes, \copyright
2020 Center for Research in Security Prices (CRSP), The University of
Chicago Booth School of Business. Wharton Research Data Services (WRDS) was
used in preparing this article. This service and the data available
thereon constitute valuable intellectual property and trade secrets
of WRDS and/or its third-party suppliers.} Our base case tests use
the CRSP US 30 day T-bill for the bond asset
and the CRSP value-weighted total return index for the stock asset.
This latter index includes all distributions for all domestic stocks
trading on major U.S.\ exchanges.
All of these various indexes are in
nominal terms, so we adjust them for inflation by using the U.S.\ CPI
index, also supplied by CRSP. We use real indexes since investors funding
retirement spending should be focused on real (not nominal) wealth goals.

We use the threshold technique \citep{mancini2009,contmancini2011,Dang2015a}
to estimate the parameters for the parametric stochastic process models.
Since the index data is in real terms, all parameters
reflect real returns.
Table \ref{fit_params} shows the results of calibrating
the models to the historical data.
The correlation $\rho_{sb}$
is computed by removing any returns which occur at times corresponding
to jumps in either series, and then using the sample covariance.
Further discussion of the validity of assuming that the stock and
bond jumps are independent is given in \citet{forsyth_2020_a}.

{\small
\begin{table}[hbt!]
\begin{center}
\begin{tabular}{cccccccc} \toprule
 CRSP & $\mu^s$ & $\sigma^s$ & $\lambda^s$ & $u^s$ &
  $\eta_1^s$ & $\eta_2^s$ & $\rho_{sb}$ \\ \midrule
       & 0.08912  & 0.1460&   0.3263  &  0.2258 & 4.3625 & 5.5335 & 0.08420\\
 \midrule
 \midrule
30-day T-bill & $\mu^b$ & $\sigma^b$ & $\lambda^b$ & $u^b$ &
  $\eta_1^b$ & $\eta_2^b$ & $\rho_{sb}$ \\ \midrule
        & 0.0046 & 0.0130 & 0.5053  &   0.3958 &  65.801 & 57.793 & 0.08420
\\
\bottomrule
\end{tabular}
\caption{Estimated annualized parameters for double exponential jump
diffusion model.  Value-weighted CRSP index, 30-day T-bill index
deflated by the CPI.  Sample period 1926:1 to 2020:12.
}
\label{fit_params}
\end{center}
\end{table}
}

\begin{remark}[Choice of 30-day T-bill for the bond index]
It might be argued that the bond index should hold longer dated bonds, e.g. ten-year
treasuries, which would allow the investor to harvest the term premium.  Long term bonds
have enjoyed high real returns over the last thirty years, due to decreasing real interest
rates during that time period.  However, it is unlikely that this will continue to be
true in the next thirty years.  \citet{Hatch_1985} study the real returns of equities,
short term T-bills, and long term corporate and government bonds, over the period 1950-1983,
and conclude that, in both Canada and the US, only equities and short term T-bills had non-negative
real returns.  Inflation (both US and Canada) averaged about 4.75\% per year
over the period 1950-1983.  If one imagines that the next thirty years will be a period with inflationary
pressures, then this suggests that the defensive asset should be short term T-bills.
However, there is nothing in the methodology in this
paper which prevents us from using other underlying bonds in the bond index.
\end{remark}

\section{Historical Market}\label{boot_section}
We compute and store the optimal controls based on the parametric
model (\ref{jump_process_stock}-\ref{jump_process_bond}) as for the
synthetic market case.  However, we compute statistical quantities
using the stored controls, but using bootstrapped historical return
data directly.  We remind the reader that all returns are
inflation adjusted.  We use the stationary block bootstrap method 
\citep{politis1994,politis2004,politis2009,
Cogneau2010,dichtl2016,Scott_2022,Simonian_2022,Cederburg_2022}.
A key parameter is the expected blocksize. Sampling the data in blocks
accounts for serial correlation in the data series.    We use the algorithm in \citet{politis2009} to determine
the optimal blocksize for the bond and stock returns separately, see Table \ref{auto_blocksize}.
We use a paired sampling approach to
simultaneously draw returns from both time series.  In this case,  a reasonable
estimate for the blocksize for the paired resampling algorithm would
be about $2.0$ years.
We will give results for a range of blocksizes as a check on the robustness of the bootstrap
results.
Detailed pseudo-code for block bootstrap resampling is given in \citet{Forsyth_Vetzal_2019a}.

\begin{table}[tb]
\begin{center}
{\small
\begin{tabular}{lc} \toprule
Data series          & Optimal expected \\
                     & block size $\hat{b}$ (months) \\ \midrule
 Real 30-day T-bill  index    &  50.6 \\
  Real CRSP value-weighted index &  3.42 \\
\bottomrule
\end{tabular}
}
\end{center}
\caption{Optimal expected blocksize $\hat{b}=1/v$ when the blocksize follows
a geometric distribution $Pr(b = k) = (1-v)^{k-1} v$. The algorithm in
\citet{politis2009} is used to determine $\hat{b}$.
Historical data range 1926:1-2020:12.
\label{auto_blocksize}
}
\end{table}

\section{Investment Scenario}
Table \ref{base_case_1} shows our base case investment scenario.
We will use thousands of dollars as our units of wealth in the following.  For example,
a withdrawal of $40$ per year corresponds to $\$40,000$ per
year (all values are real, i.e. inflation
adjusted), with an initial wealth of $1000$ ($\$1,000,000$).    This 
would correspond to the use of the four per cent rule \citep{Bengen1994}.
Our base case scenario assumes a fee of 50 bps per year.  We refer to
\citet{chen_2021_d} for a discussion of tontine fees.

As a motivating example, we consider a 65-year old Canadian retiree who has a pre-retirement salary of \$100,000 per year,
with \$1,000,000 in a DC savings account.  Government benefits are assumed to amount to about \$20,000 per year (real).  The
retiree wishes the DC plan to generate at least \$40,000 per year (real), so that the DC plan and government
benefits  replace 60\% of pre-retirement income.
We assume that the retiree owns mortgage-free real estate worth about \$400,000.  In an act of mental accounting, the
retiree plans to use the real estate as a longevity hedge, which could be monetized using a reverse mortgage.  In the
event that the longevity hedge is not needed, the real-estate will be a bequest.  Of course, the retiree would like to
withdraw more than \$40,000 per year from the DC plan, but has no use for withdrawals greater than \$80,000 per year.
We make the further assumption that the real-estate holdings can generate \$200,000 through a reverse mortgage.  Hence,
as a rough rule of thumb, any expected shortfall at $T=30$ years greater than $-\$200,000$ is an acceptable level of risk.

Our view that personal real estate is not fungible with investment assets (unless investment assets are depleted) is
consistent with the behavioral life
cycle approach originally described in \citet{Shefrin-Thaler:1988} and \citet{Thaler:1990}. In
this framework, investors divide their wealth into separate ``mental
accounts'' containing funds intended for different purposes such as
current spending or future need.

We take the view of a 65-year old retiree, who wants to maximize her total withdrawals, and minimize
the risk of running out of savings, assuming that she lives to the age of 95.
We also assume that the retiree has no bequest motive.

Recall that \citet{Bengen1994} attempted to determine a safe real withdrawal rate, and constant
allocation strategy, such that
the probability of running out of cash after 30 years of retirement was small.
In other words, \citet{Bengen1994} maximized total withdrawals over a 30 year period, 
assuming that the retiree survived for the entire 30 years.  This is, of course, a
conservative assumption.

In our case, we are essentially answering the same question.  The key difference here
is that we (i) allow for dynamic asset allocation, (ii) allow variable withdrawals (within limits)
and (iii) assume a tontine overlay.

\begin{table}[hbt!]
\begin{center}
\begin{tabular}{lc} \toprule
Retiree & 65-year old Canadian male\\
Tontine Gain $\mathbb{T}^g$ & equation (\ref{tontine_gain}) \\
Group Gain $G$ ( see equation (\ref{tontine_gain_rate_special_plus_G} ) ) & 1.0\\
Mortality table & CPM 2014\\
Investment horizon $T$ (years) & 30.0  \\
Equity market index & CRSP Cap-weighted index (real) \\
Bond index & 30-day T-bill (US) (real) \\
Initial portfolio value $W_0$  & 1000 \\
Cash withdrawal$/$rebalancing times & $t=0,1.0, 2.0,\ldots, 29.0$\\
Maximum withdrawal (per year)   & $ q_{\max} = 80$\\
Minimum withdrawal (per year)   & $ q_{\min} = 40$\\
Equity fraction range & $[0,1]$\\
Borrowing spread $\mu_c^{b}$ & 0.02 \\
Rebalancing interval (years) & 1.0  \\
$\alpha$ (EW-ES)               & .05 \\
Fees $\mathbb{T}^f$ ( see equation (\ref{W_plus_def}) )        & 50 bps per year\\
Stabilization $\epsilon$ ( see equation (\ref{PCES_a}) )    &  $ -10^{-4} $ \\
Market parameters & See Table~\ref{fit_params} \\ \bottomrule
\end{tabular}
\caption{Input data for examples.  Monetary units: thousands of dollars.
CPM2014 is the mortality table from the Canadian Institute of Actuaries.
\label{base_case_1}}
\end{center}
\end{table}

\section{Constant Withdrawal, Constant Equity Fraction.}\label{constant_p_q_section}
As a preliminary example, in this section we present results for the scenario in
Table \ref{base_case_1}, except that a constant withdrawal of $40$ per year is specified,
along with a constant weight in stocks at each rebalancing date.  

Table \ref{const_q_p_synthetic_table} gives the results for various values of the
constant weight equity fraction, in the synthetic market.  The best result\footnote{Recall that
$ES$ is defined in terms of the left tail mean of final wealth (not losses) hence a larger
value is preferred.} for
$ES$ (the largest value) occurs at the rather low constant equity weight of $p=0.1$,
with an $ES$ $=-239$.
Table \ref{constant_p_q_historical_table} gives similar results, this time using
bootstrap resampling of the historical data (the historical market).
This time, the best value of $ES=-305$ occurs for a constant equity fraction of $p=0.4$.
Consequently, in both the historical and synthetic market, the constant weight, constant
withdrawal strategy fails to meet our risk criteria of $ES > -200$.

These simulations indicate that there is significant depletion risk for the 
constant withdrawal, constant weight strategy suggested in \citet{Bengen1994}.
{\small
\begin{table}[hbt!]
\begin{center}
{\small
\begin{tabular}{ccccc} \hline
Equity fraction $p$  &  $E[ \sum_i \qq_i]/T$ & ES (5\%)  & $Median[W_T]$   \\ \hline
    0.0          &40         & -302.57      &-150.56 \\
         0.1&     40 &         -238.62 &    -6.82\\
        0.2 &    40  &        -245.48  &      168.10\\
        0.3 &    40  &        -280.27  &     386.05\\
        0.4 &    40  &        -330.37  &    649.58\\
         0.5&    40  &        -391.61  &    958.33\\
       0.6  &    40  &        -461.54  &      1312.17\\
       0.7  &    40  &        -538.04  &     1706.49\\
       0.8  &    40  &        -619.31  &     2135.24\\
       %.90  &    40  &        -703.99  &     2588.23\\
\hline
\end{tabular}
}
\caption{Constant weight, constant withdrawals, synthetic market results. No tontine gains.
 Stock index: real capitalization weighted CRSP stocks;
bond index: real 30-day T-bills.  Parameters from Table \ref{fit_params}.
Scenario in Table \ref{base_case_1}.
Units: thousands of dollars. Statistics
based on $2.56 \times 10^6$ Monte Carlo simulation runs.
$T = 30 $ years.
\label{const_q_p_synthetic_table}
}
\end{center}
\end{table}
}

{\small
\begin{table}[hbt!]
\begin{center}
{\small
\begin{tabular}{cccc} \hline
Equity fraction $p$     & $E[ \sum_i \qq_i]/T$  &  ES (5\%)    &  $Median[W_T]$   \\ \hline
          0.0  &          40   &        -508.44 &    -155.04\\
         0.1   &      40       &    -418.02     &  -10.98\\
         0.2   &     40        &    -350.00     &    164.75\\
         0.3   &     40        &     -312.24    &    382.16\\
         0.4   &     40        &    -305.52     &    649.04\\
          0.5  &     40        &      -326.40   &    966.61\\
          0.6  &     40        &     -370.18    &   1336.31\\
          0.7  &     40        &     -432.55    &   1759.66\\
          0.8  &     40        &     -509.00    &   2232.29\\
\hline
\end{tabular}
}
\caption{Constant weight, constant withdrawals, historical market.
No tontine gains.
Historical data range 1926:1-2020:12.
Constant withdrawals are 40 per year.
 Stock index: real capitalization weighted CRSP stocks;
bond index: real 30-day T-bills.
Scenario in Table \ref{base_case_1}.
Units: thousands of dollars. Statistics
based on $10^6$ bootstrap simulation runs.  Expected blocksize $=2$ years.
$T = 30 $ years
\label{constant_p_q_historical_table}
}
\end{center}
\end{table}
}

\section{Synthetic Market Efficient Frontiers}
Figure \ref{synthetic_frontiers_figs_all} shows the efficient EW-ES frontiers,
computed in the synthetic market,
for the following cases:
\begin{description}
   \item[Tontine:] the case in Table \ref{base_case_1}.  The control is computed using the
     algorithm in Section \ref{DP_program} and then stored, and used in Monte Carlo simulations.  The detailed
     frontier is given in Table \ref{synthetic_EW_ES_table_synthetic}. 
   \item[No Tontine:] the case in Table \ref{base_case_1}, but without any tontine gains.
             The control is computed and stored, and then used in Monte Carlo simulations.
             The detailed frontier is given in Table \ref{synthetic_EW_ES_table_synthetic_no_tontine_gains}.
   \item[Const q=40, Const p:] The best single point from Table \ref{const_q_p_synthetic_table}, based on
            Monte Carlo simulations.
\end{description}

Note that all these strategies produce a minimum withdrawal of $40$ per year (i.e. 4\% real of the initial
investment) for thirty years.  However, the best result for the constant weight strategies was $(EW,ES) = (40, -239)$
This can be improved significantly by optimizing over withdrawals and asset allocation, but with no tontine gains.
For example, from Table \ref{synthetic_EW_ES_table_synthetic_no_tontine_gains}, the nearest point with roughly
the same level of risk is $(EW,ES) = (58,-242)$.  However, the improvement with optimal controls and tontine gains
is remarkable.  For example, it seems reasonable to target a value of $ES \simeq 0$.
From Table \ref{synthetic_EW_ES_table_synthetic}, we note the point $(EW,ES) = (69,47)$,
which is dramatically better than any No Tontine Pareto point.
This can also be seen from the large outperformance in the EW-ES frontier compared to the No Tontine case
in Figure \ref{synthetic_frontiers_figs_all} .

%
%   This is the updated version Sept 19/21
%
\begin{figure}[htb!]
\centerline{%
\includegraphics[width=3.0in]{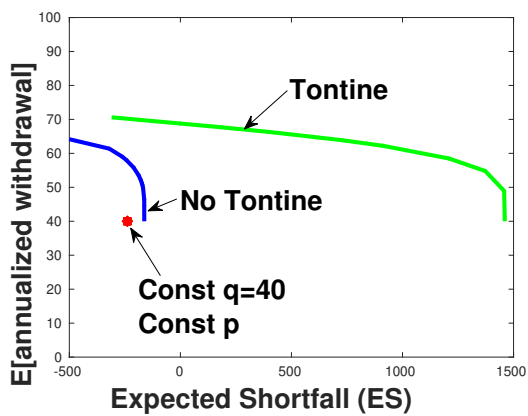}
}
\caption
{Frontiers generated from  the synthetic market.
Parameters based on real CRSP index,
real 30-day T-bills (see Table \ref{fit_params}).
Tontine case is as in Table \ref{base_case_1}.
The No Tontine case uses the same scenario, but with no tontine
gains, and no fees.  
The Const q, Const p case has $q=40$, $p=0.10$, with no
tontine gains, which
is the best result from Table \ref{const_q_p_synthetic_table},
assuming no tontine gains, and no fees.
Units: thousands of dollars.}
\label{synthetic_frontiers_figs_all}
\end{figure}

\subsection{Effect of Fees}
Figure \ref{synthetic_fees_fig} shows the effect of varying the annual fee in the synthetic
market, for the scenario in Table \ref{base_case_1}.  Recall that the base case specified
a fee of 50 bps per year.  Assuming a shortfall target of ES $\simeq 0$, then the effect of
fees in the range $0-100$ bps is quite modest.  Even with annual fees of 100 bps, the Tontine case
still significantly outperforms the No Tontine case (which is assumed to have no fees).

\begin{figure}[htb!]
\centerline{%
\includegraphics[width=3.0in]{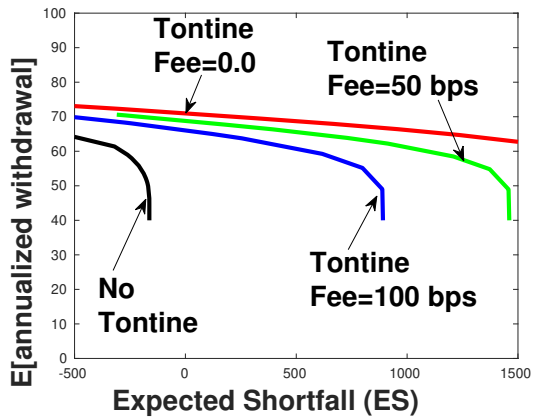}
}
\caption
{Effect of varying fees charged for the Tontine,
basis points (bps) per year.  Frontiers generated from  the synthetic market.
Parameters based on real CRSP index,
real 30-day T-bills (see Table \ref{fit_params}).
Base case Tontine  is as in Table \ref{base_case_1} (fees $50$ bps per year).
The No Tontine case uses the same scenario, but with no tontine
gains, and no fees.
Units: thousands of dollars.}
\label{synthetic_fees_fig}
\end{figure}

\subsection{Effect of Random G}
Recall the definition of the group gain $G_i$ at time $t_i$
in equation (\ref{group_gain}).  Basically, the group gain is used
to ensure that the total amount of mortality credits disbursed is exactly
equal to the total amount forfeited by tontine participants who have
died in $(t_{i-1}, t_i)$.

If Condition \ref{small_bias_assumption} holds, then we expect that the effect of
randomly varying $G_i$ to have a small cumulative effect.  In \citet{Fullmer_2019} and \citet{Winter_2020},
the authors create synthetic tontine pools, where the investors have different initial
wealth, ages, genders, and investment strategies.  These pools are perpetual,
i.e. new members join as original members die.  It is assumed that the investors can
only select an asset allocation strategy from a stock index and a bond
index, both of which follow a geometric Brownian motion (GBM).

The payout rules are different from those
suggested in this paper, however, it is instructive to observe the following.
In \citet{Fullmer_2019}, the perpetual tontine pool has 15,000 investors at steady-state.  After the initial
start-up period, the expected value of the group gain $G_i$ at each rebalancing time
is close to unity, with a standard deviation of about $0.1$.   \citet{Fullmer_2019} also
note that there is essentially no correlation between investment returns, and the
group gain.

Figure \ref{synthetic_random_G_fig} shows the effect of randomly varying $G_i$.
The curve labeled $G=1.0$ is the base case EW-ES curve from the scenario
in Table \ref{base_case_1}, in the synthetic market (parameters in
Table \ref{fit_params}).   The controls from this base case are 
stored, and then used in Monte Carlo simulations, where $G$ is 
assumed to have a normal distribution with mean one, and standard
deviation of $0.1$.  The EW-ES curves for both cases essentially
overlap, except for very large values of $ES$, which are not of any
practical interest.  We get essentially the same result if we
use a uniform distribution for $G$ with $E[G] = 1$, with the same standard deviation.
This is not surprising, since, assuming that the value function is smooth,
then a simple Taylor series argument shows that, for any assumed distribution 
of $G$ with mean one, the effect of randomness of $G$ is a second order effect in
the standard deviation.

Of course, we cannot determine the actual distribution of $G$ without
a detailed knowledge of the characteristics of the tontine pool.
In fact, if we knew the distribution, we could include it in the formulation
of the optimal control problem.  However, knowledge of the distribution
of $G$ 
is unlikely to be available to pool participants in practice.

Nevertheless, the simulations in \citep{Fullmer_2019,Winter_2020}, coupled with our
results as shown in Figure \ref{synthetic_random_G_fig}, suggest that, for a sufficiently
large, diversified pool of investors that the  effects of randomly varying $G$ are negligible.

\begin{figure}[htb!]
\centerline{%
\includegraphics[width=3.0in]{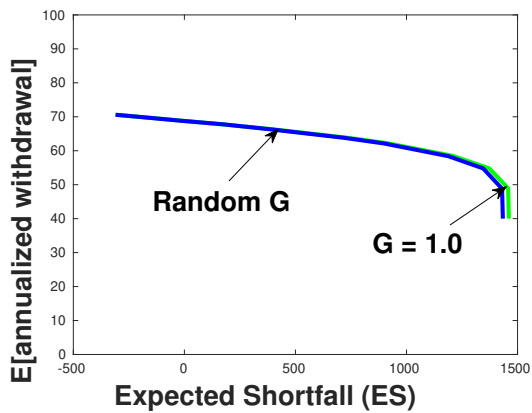}
}
\caption
{Effect of randomly varying group gain $G$ (Section \ref{group_section}).
Frontiers generated from  the synthetic market.
Parameters based on real CRSP index,
real 30-day T-bills (see Table \ref{fit_params}).
Base case Tontine ($G=1.0)$  is as in Table \ref{base_case_1}.
Random G case uses the control computed for the base case,
but in the Monte Carlo simulation, $G$ is normally distributed
with mean one and standard deviation $0.1$.
Units: thousands of dollars.}
\label{synthetic_random_G_fig}
\end{figure}

%
%   updated Sept 19/21
%     
%    updated Sept 20/21

\section{Bootstrapped Results}
As discussed in Section \ref{boot_section}, a key parameter in the stationary block bootstrap technique
is the expected blocksize.  In Figure \ref{boot_frontier_fig}, we show the results of the following test. 
We compute and store the optimal controls, based on the synthetic market.  Then we use these controls, but
carry out tests on bootstrapped historical data.  The efficient frontiers in Figure \ref{boot_frontier_fig},
for ES $<1000$ essentially overlap for all expected blocksizes in the range $0.5-5.0$.  
Since it is probably not of interest to aim for an ES of 1000 (which is one million dollars) at age 95,
this indicates that the
computed strategy is robust to parameter uncertainty.  

Figure \ref{boot_alpha_fig} compares the efficient frontier
tested in the historical market (expected blocksize 2 years), with the efficient frontier in the synthetic market.
We observe that the synthetic and historical curves  overlap for $ES < 1000$, which
again verifies that the controls are robust to data uncertainty.
The efficient frontiers/points 
for the No Tontine case and the constant weight, constant withdrawal strategy
(computed in the historical market) are also shown.  The Tontine overlay continues to outperform the No Tontine
case by a wide margin.

\begin{figure}[tb]
\centerline{%
\begin{subfigure}[t]{.40\linewidth}
\centering
\includegraphics[width=\linewidth]{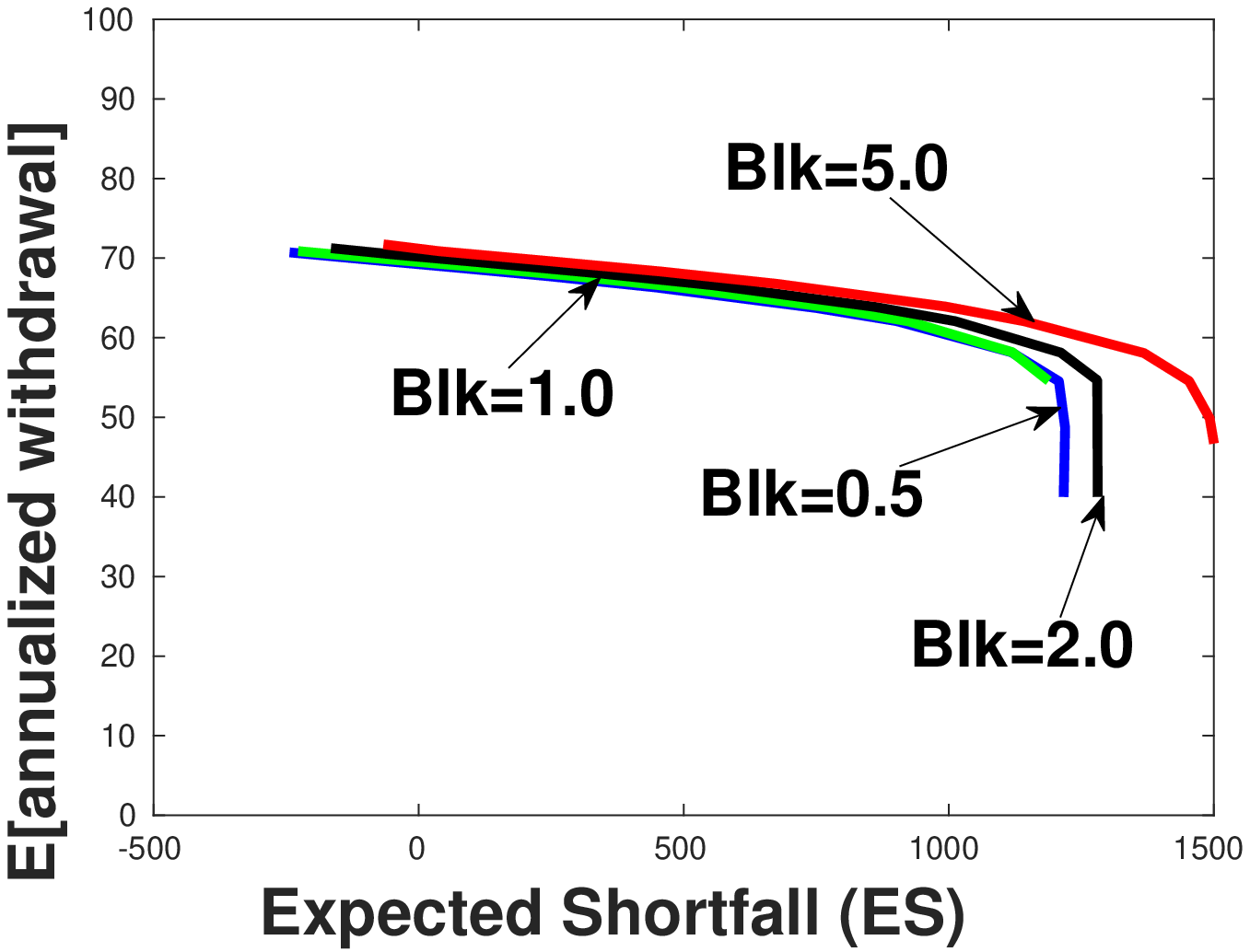}
\caption{Efficient frontiers, historical market, effect of varying expected blocksize (Blk).}
\label{boot_frontier_fig}
\end{subfigure}
\begin{subfigure}[t]{.40\linewidth}
\centering
\includegraphics[width=\linewidth]{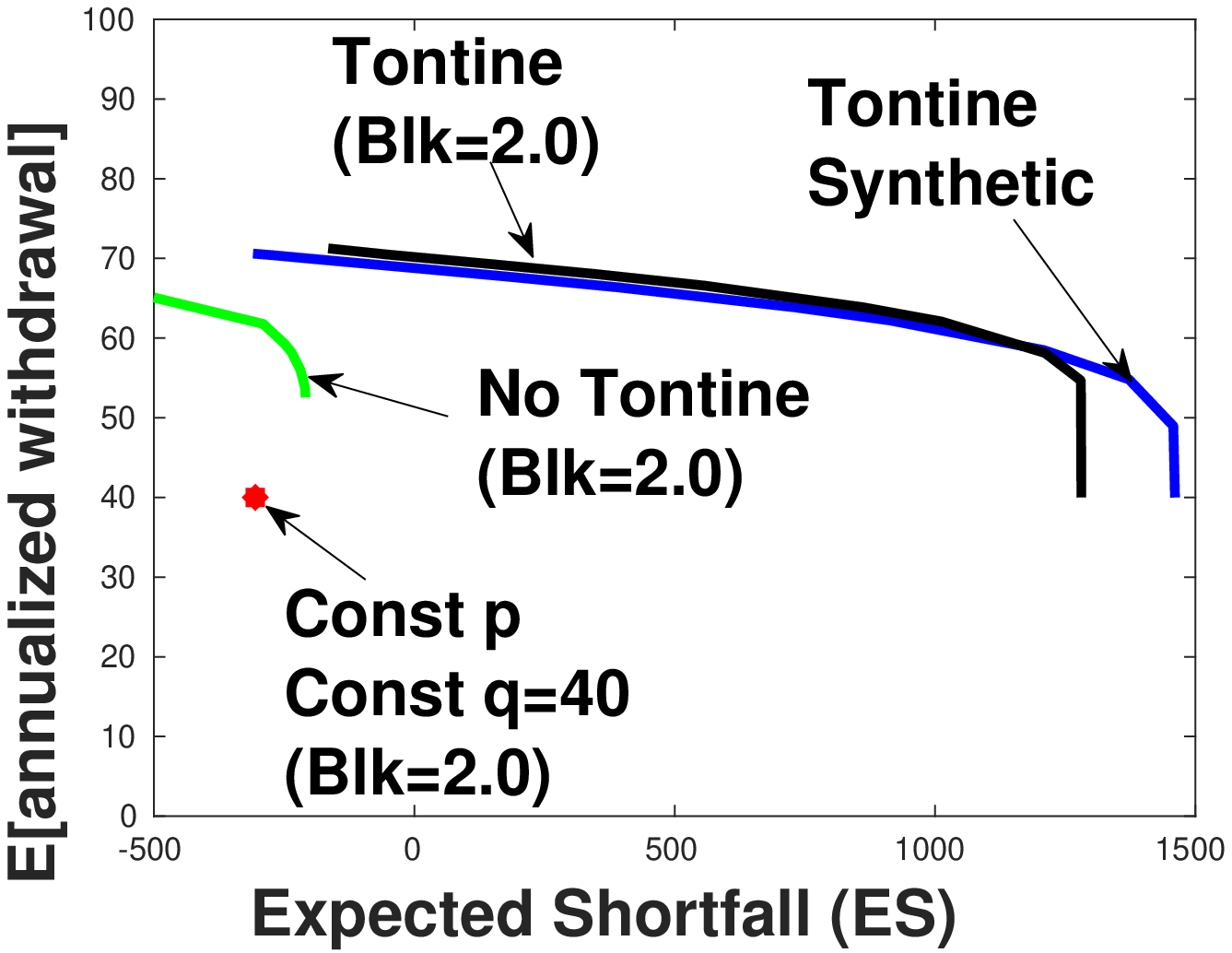}
\caption{Efficient frontiers, historical market, expected blocksize $=2.0$ years. 
Synthetic market frontier (tontine) also shown.
}
\label{boot_alpha_fig}
\end{subfigure}
}
\caption{
Optimal strategy
determined by solving Problem \ref{PCES_a} in the synthetic market,
parameters in Table \ref{fit_params}.  Control stored and then tested
in bootstrapped historical market.
Inflation adjusted data, 1926:1-2020:12.
Non-Pareto points eliminated.  Expected blocksize (Blk, years) used in the bootstrap
resampling method also shown.
Units: thousands of dollars.
The const q, const p case had $(p,q) = (0.4, 40)$ (no tontine gains).  This is the best result for the
constant $(p,q)$ case, shown in Table \ref{constant_p_q_historical_table}.
}

\label{boot_figs}
\end{figure}

\section{Detailed Historical Market Results: EW-ES Controls}
In this section, we examine some detailed characteristics of the optimal
EW-ES strategy, tested in the historical market for the scenario
in Table \ref{base_case_1}.
Figure \ref{percentiles_EW_ES_figs} shows the percentiles of
fraction in stocks, wealth, and withdrawals versus time,
for the EW-ES control with $\kappa = 0.18$,
with $(EW,ES) = (69, 204)$.
To put this in perspective, recall that this strategy never withdraws
less than $40$ per year.  Compare this to the best case for a constant
withdrawal, constant weight strategy (no tontine) from Table \ref{constant_p_q_historical_table},
which has $(EW,ES) = (40, -306) $,
or to the optimal EW-ES strategy, but with no tontine, from Table \ref{historical_EW_ES_table_no_tontine},
which has $(EW,ES) = (70,-806)$.

Figure \ref{percentile_control_fig} shows that the median optimal fraction in stocks starts at about $0.60$, then drops to
about $0.20$ at 15 years, finally ending up at zero in year 26.  Figure \ref{percentile_wealth_fig} indicates that for
the years in the span of $20-30$, the median and fifth percentiles of wealth are fairly tightly clustered, with
the fifth percentile being well above zero at all times.  The optimal withdrawal percentiles  are shown
in Figure \ref{percentile_qplus_fig}.  The median withdrawal starts at $40$ per year,  then increases
to the maximum withdrawal of $80$ in years $3-4$, and remains at $80$ for the remainder of the thirty year
time horizon.

\begin{figure}[tb]
\centerline{%
\begin{subfigure}[t]{.33\linewidth}
\centering
\includegraphics[width=\linewidth]{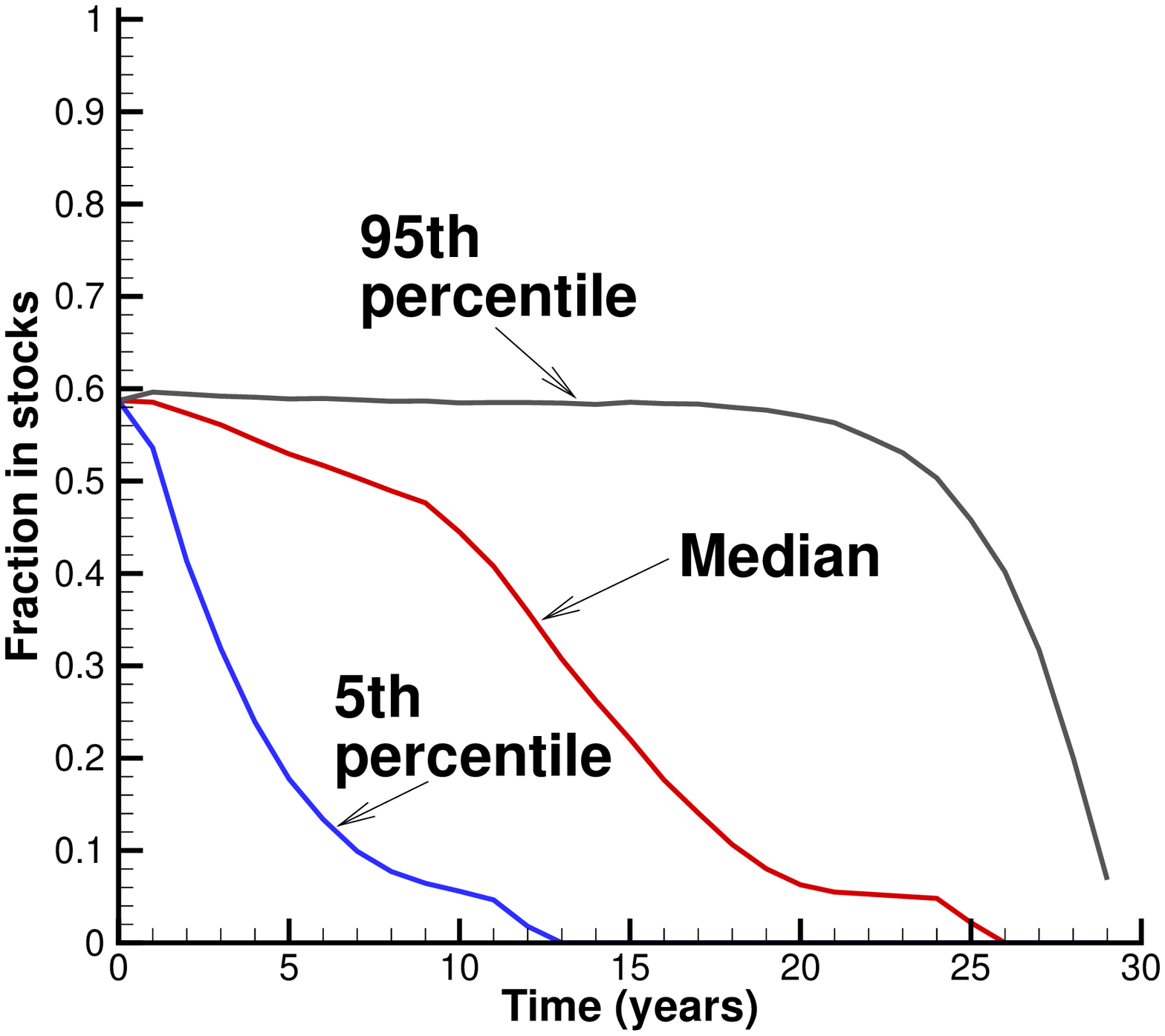}
\caption{Percentiles fraction in stocks}
\label{percentile_control_fig}
\end{subfigure}
\begin{subfigure}[t]{.33\linewidth}
\centering
\includegraphics[width=\linewidth]{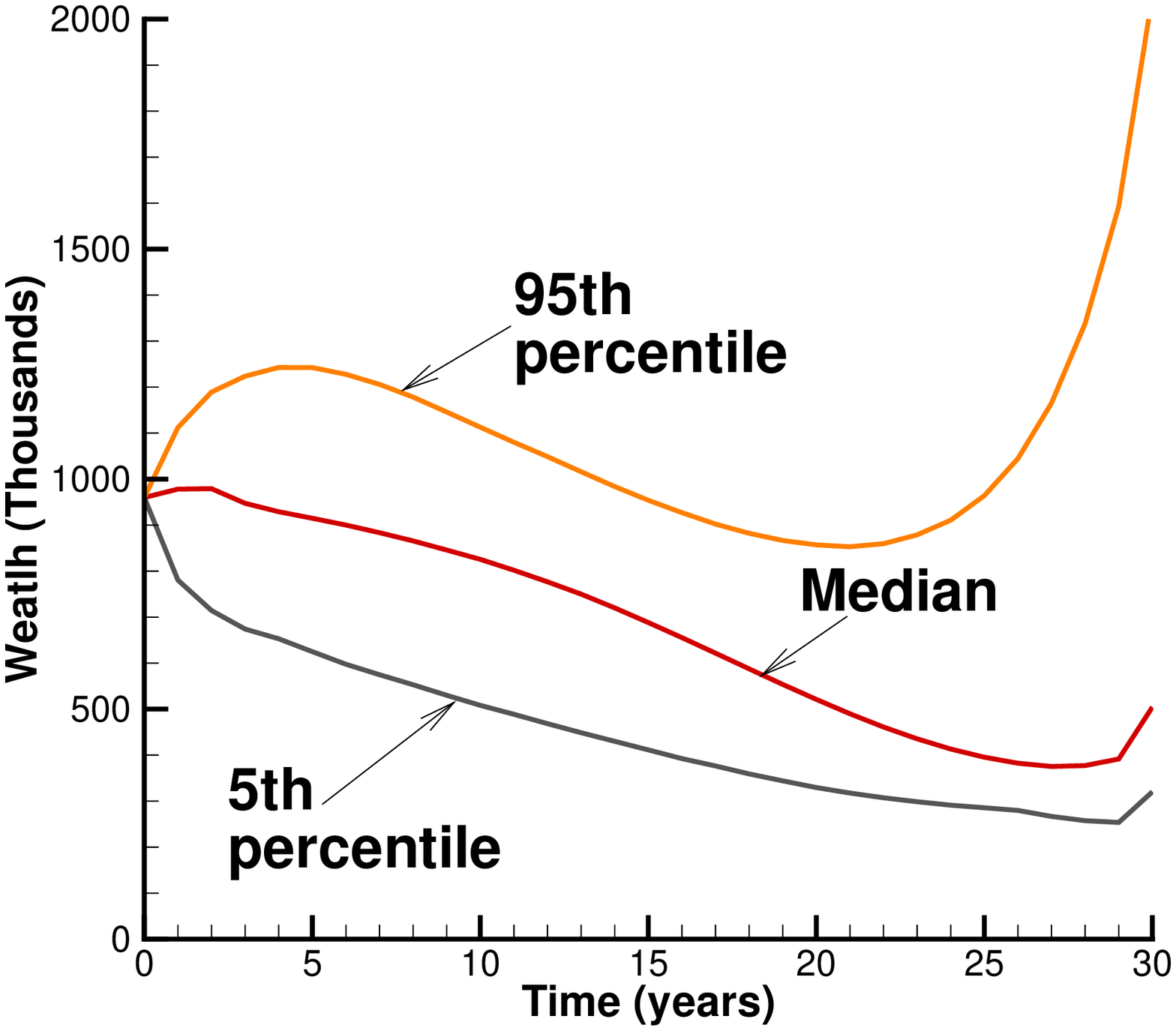}
\caption{Percentiles  wealth}
\label{percentile_wealth_fig}
\end{subfigure}
\begin{subfigure}[t]{.33\linewidth}
\centering
\includegraphics[width=\linewidth]{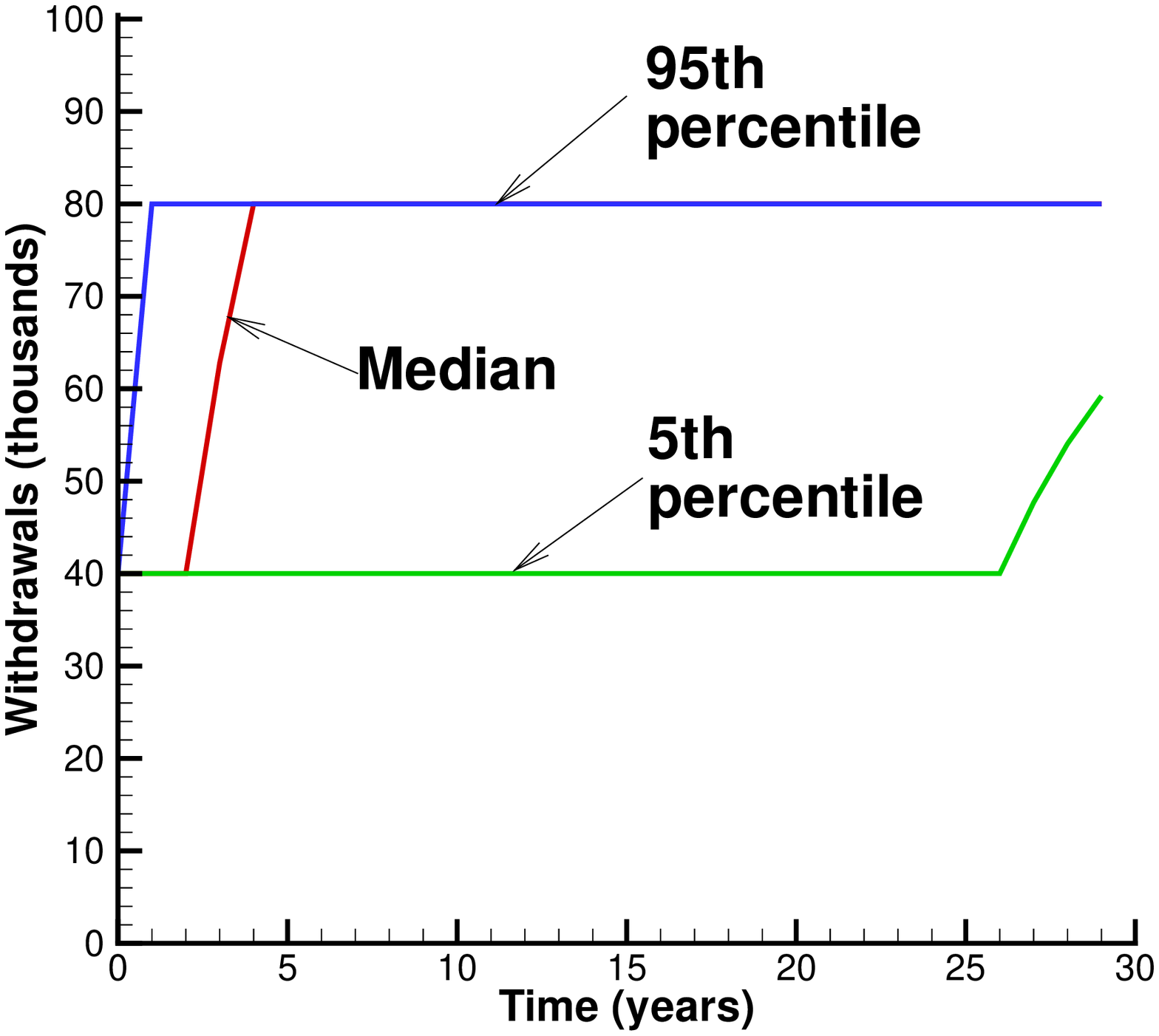}
\caption{Percentiles withdrawals}
\label{percentile_qplus_fig}
\end{subfigure}
}
\caption{Scenario in Table  \ref{base_case_1}.
EW-ES control
computed from problem EW-ES Problem (\ref{PCES_a}).
Parameters based on the real CRSP index,
and real 30-day T-bills (see Table \ref{fit_params}).  Control computed and stored
from the Problem (\ref{PCES_a}) in the synthetic market.
Control used in the historical market, $10^6$ bootstrap samples.
$q_{min} = 40, q_{\max} = 80$ (per year), ${\kappa}= 0.18$. $W^* = 385$.
Units: thousands of dollars.
}
\label{percentiles_EW_ES_figs}
\end{figure}

Figure \ref{heat_map_EW_LSfig_all} shows the optimal control
heat maps for the fraction in stocks and withdrawal amounts, for the scenario
in Table \ref{base_case_1}.  Figure \ref{heat_map_EW_ES_fig} shows a smooth behavior
of the optimal fraction in stocks as a function of $(W,t)$.  This can
be compared with the equivalent heat map for the EW-ES control 
in \citet{Forsyth_2021_b} (no tontine gains), which is much more aggressive at changing
the asset allocation in response to changing wealth amounts.  The smoothness
of the controls in Figure \ref{heat_map_EW_ES_fig} appears to be  due
to the rapid de-risking of a strategy which includes tontine gains, which
provides a natural protection against sudden stock index drops.
The upper blue zone in Figure \ref{heat_map_EW_ES_fig} is de-risking due to
the fact that, with sufficiently large wealth, there is essentially no probability
of running out of cash even at the maximum withdrawal amount.
The use of the stabilization factor $\epsilon = -10^{-4}$ forces the strategy
to increase the weight in bonds for large values of wealth (see equation (\ref{PCES_a})).\footnote{``When you have won
the game, stop playing,'' William Bernstein.}
The lower red zone is in response to extremely poor wealth outcomes, which
means that the optimal strategy 
is to invest 100\% in stocks and hope for the best.  However, this
is an extremely unlikely outcome, as can be verified from
Figure \ref{percentile_wealth_fig}.

From Figure \ref{heat_map_qplus_EW_ES_fig}, we can observe that the optimal withdrawal strategy
is essentially a bang-bang control, i.e. withdraw at either the maximum or minimum amount per
year.  This is not unexpected, as discussed in Appendix
\ref{appendix_limit}.  We also note that this
type of strategy  has been suggested previously, based on heuristic reasoning.\footnote{
      {\em{``If we have a good year,
         we take a trip to China,...if we have a bad year, we stay home and play canasta,''}}
         retired professor Peter Ponzo, discussing his DC plan withdrawal strategy
    {\url{https://www.theglobeandmail.com/report-on-business/math-prof-tests-investing-formulas-strategies/article22397218/}}
       .}

\begin{figure}[htb!]
\centerline{
\begin{subfigure}[t]{.4\linewidth}
\centering
\includegraphics[width=\linewidth]{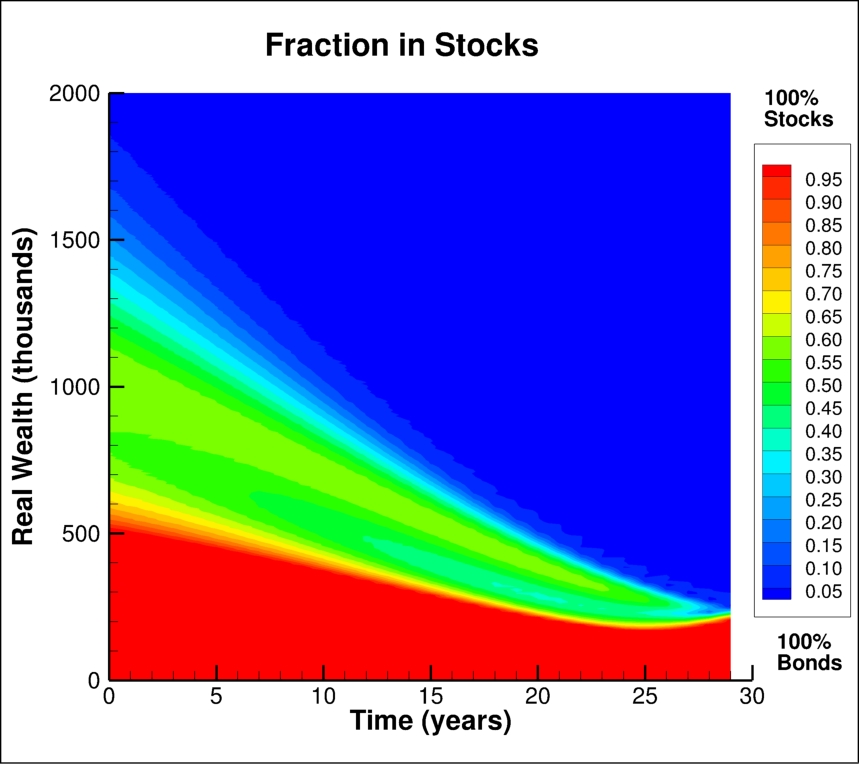}
\caption{Fraction in stocks}
\label{heat_map_EW_ES_fig}
\end{subfigure}
\hspace{.05\linewidth}
\begin{subfigure}[t]{.4\linewidth}
\centering
\includegraphics[width=\linewidth]{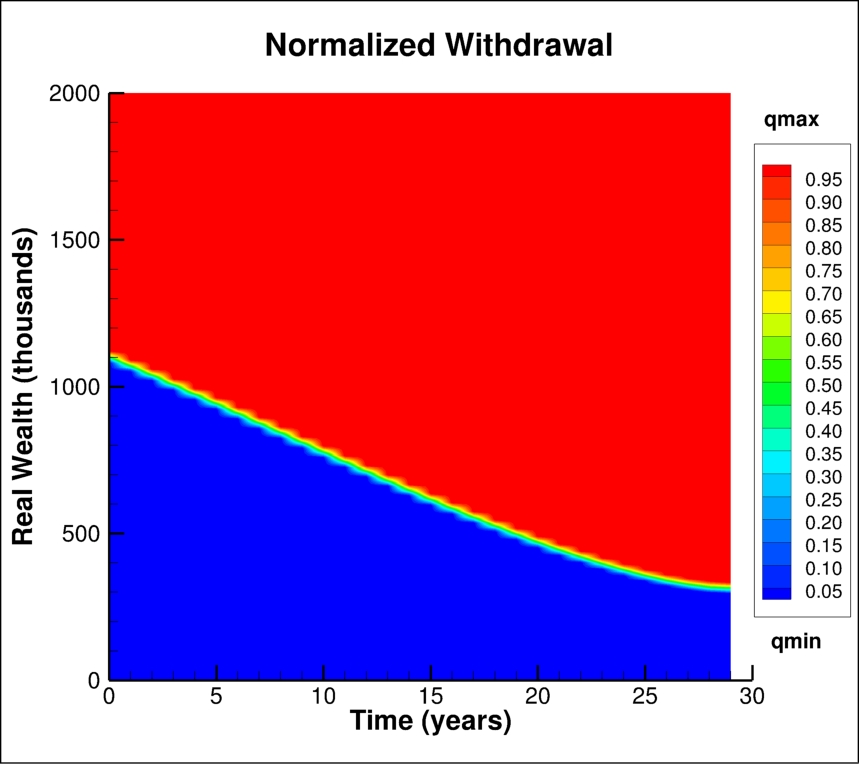}
\caption{Withdrawals}
\label{heat_map_qplus_EW_ES_fig}
\end{subfigure}
}
\caption{
Optimal EW-ES.
Heat map of controls: fraction in stocks and
withdrawals, computed from   Problem EW-ES (\ref{objective_overview}).
Real capitalization weighted CRSP index, and real 30-day T-bills.
Scenario given in Table~\ref{base_case_1}.
Control computed and stored from the Problem \ref{PCES_a} in the synthetic market.
$q_{min} = 40, q_{\max} = 80$ (per year). ${{\kappa}} = 0.18$. $W^* = 385$.
$\epsilon = -10^{-4}$.
Normalized withdrawal $(q - q_{\min})/(q_{\max} - q_{\min})$.
Units: thousands of dollars.
}
\label{heat_map_EW_LSfig_all}
\end{figure}

\section{Discussion}
Traditional annuities with true inflation protection are unavailable 
in Canada.\footnote{Some providers advertise annuities with inflation protection,
however this is simply an escalating nominal payout, based on a fixed escalation rate.}
Since inflation is expected to be a major factor in the coming years,
inflation protection is a valuable aspect of the optimal EW-ES strategy,
with a tontine overlay.\footnote{Examination of historical periods of high
inflation suggests that a portfolio of short term T-bills and an equal
weight stock index generates significant positive real returns,
see \citet{forsyth_white_paper_2022_inflation}.}
This strategy has an expected {\em real} withdrawal rate, over thirty
years, of about 7\% of the initial capital (per annum), never withdraws
less than 4\% of initial capital per annum, and a positive ES (expected
shortfall) at the 5\% level after thirty years.

Consequently, if we consider a retiree with no bequest motive, then joining a tontine
pool, and following an optimal EW-ES strategy, is certainly an excellent alternative
to  a life annuity. 
Hence, it could be argued that, going forward, the EW-ES optimal  tontine pool strategy has
less risk than a conventional annuity.

Of course, there is no free lunch here.  The reason that  the tontine approach
has a higher mean (and median) payout is that it is not guaranteed.  There is some
flexibility in the withdrawal amounts, and the portfolio contains risky assets.
However, the ultimate risk, as measured by the expected shortfall at year thirty,
is very small.  We can also see that the median payout rises rapidly to the
 maximum withdrawal rate (8\% real of the initial investment) within 3-4 years
of retirement, and stays at
the maximum for the remainder of the thirty year horizon.

As well, the investor forfeits the entire portfolio in the event of
death.  Although this is often considered a drawback, we remind the reader
that annuities and defined benefit (DB) plans have this same property (restricting
attention to a single retiree with no guarantee period).\footnote{Moshe Milevsky, an advocate of modern tontines,
is quoted in the Toronto Star (April 13, 2021) as noting that
{\em``If you give up some of your money when you  die,
you can get more when you are alive.''} }   Of course, it is possible
to overlay various guarantees on to the tontine pool, e.g. a guarantee period, a money back
guarantee, or joint and survivor benefits.  The cost of these guarantees would, 
of course,  reduce the expected annual
withdrawals.

These results are robust to fees in the range of 50-100 bps per year. The long term
results are also insensitive to random group gains.\footnote{The randomness of the
group gain is due to fact that real tontine pools will be finite and heterogeneous.}

However, the tontine gains (after fees) are comparatively small for retirees in
the 65-70 age range.  This suggests that it may be optimal to delay joining
a tontine until  the investor has attained an age of 70 or more.

Although we have explicitly excluded a bequest motive from our considerations, note that
the median withdrawal strategy rapidly ramps up to the maximum withdrawal within
a few years of retirement, and remains there for the entire remaining retirement
period.  Although it is commonly postulated that retirement consumption follows a
{\em U-shaped} pattern, recent studies indicate that real retirement consumption
falls with age (at least in countries which do not have large end of life expenses)\citep{Brancatti_2015}.
In this case, the withdrawals which occur towards the end of the retirement period may
exceed consumption.  This allows the retiree to use these excess cash flows as a living
bequest to relatives or charities.

\section{Conclusions}
DC plan decumulation strategies are typically based on some variant
of the {\em four per cent rule} \citep{Bengen1994}.  However, bootstrap tests
of these rules using historical data show a significant risk of running
out of savings at the end of a thirty year retirement planning horizon.

This risk can be significantly reduced by using optimal stochastic control
methods, where the controls are the asset allocation strategy and the
withdrawal amounts (subject to maximum and minimum constraints)\citep{Forsyth_2021_b,Forsyth_Arva_a}.

However, if we assume the retiree couples an optimal allocation/withdrawal strategy
with participation in a tontine fund, then the risk of portfolio depletion
after 30 years is virtually eliminated.  At the same time, the cumulative
total withdrawals are considerably increased compared with the previous
two strategies.  Of course, this comes at a price: the retiree forfeits
her portfolio upon death.  Hence the tontine overlay is most appealing
to investors who have no bequest motivation, or who manage bequests using other funds.

We should also note that individual tontine accounts allow for complete flexibility
in asset allocation strategies and do not require
purchase of expensive investment products.  These accounts are essentially peer-to-peer
longevity risk management tools. Consequently, the custodian of these
accounts bears no risk, and incurs only bookkeeping 
costs.   Hence the fees charged by the custodian of these accounts
can be very low.   If desired, the retiree can pay for additional
investment advice in a completely transparent manner.

\section{Acknowledgements}
Forsyth's work was supported by the Natural Sciences and Engineering Research Council of
Canada (NSERC) grant RGPIN-2017-03760.  Vetzal's work was supported by a
Canadian Securities Institute Research Foundation grant 2021-2022.

\section{Declaration}
The authors have no conflicts of interest to report.

\clearpage

\appendix
\section*{Appendix}

\section{Induced Time Consistent Strategy}\label{appendix_induced}

Denote the investor's initial wealth at $t_0$ by $W_0^-$.  Then we have the following
result:
\begin{proposition}[Pre-commitment strategy equivalence to a time consistent
policy for an alternative objective function]\label{equiv_thm}
The pre-commitment EW-ES strategy $\mathcal{P}^*$ determined by solving
$J(0, W_0, t_0^-)$ (with $\mathcal{W}^*( 0, W_0^-)$ from equation (\ref{pcee_argmax}))
is the time consistent strategy for the equivalent
problem $TCEQ$ (with fixed $\mathcal{W}^*(0,W_0^-)$), with value function $\tilde{J}(s,b,t)$
defined by 
\begin{eqnarray}
 \left(\mathit{TCEQ_{t_n}}\left(\kappa / \alpha  \right)\right):
    \qquad \tilde{J}\left(s,b,  t_{n}^-\right)  
    & = &
            \sup_{\mathcal{P}_{n}\in\mathcal{A}}
        \Biggl\{
               E_{\mathcal{P}_{n}}^{X_n^+,t_{n}^+}
           \Biggl[ ~\sum_{i=n}^{M} \mathfrak{q}_i ~  + ~
               \frac{\kappa}{\alpha} \min (W_T -\mathcal{W}^*(0, W_0^-),0) 
                    \Biggr. \Biggr. \nonumber \\
         & &  \Biggl. \Biggl. ~~~~~ 
                \bigg\vert X(t_n^-) = (s,b)
                   ~\Biggr] \Biggr\}~.
             \label{timec_equiv}
\end{eqnarray}

\end{proposition}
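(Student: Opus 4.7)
My plan is to prove Proposition \ref{equiv_thm} by the standard pre-commitment-to-induced-time-consistent conversion, following the Rockafellar--Uryasev reformulation of expected shortfall used in \citet{forsyth_2019_c} and \citet{Strub_2019_a}. The key observation is that once the outer supremum over $W^*$ has been solved at $(0,W_0^-)$ and the optimizer $\mathcal{W}^*(0,W_0^-)$ is fixed (treated as an exogenous constant for all $t \geq t_0$), the remaining inner problem is a standard expected-value maximization to which Bellman's principle applies in the usual way.

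First, I would recall the decomposition established in equation (\ref{pcee_inter}): the pre-commitment value function equals $\sup_{W^*} V(s,b,W^*,t_0^-)$, where $V$ is an ordinary conditional expectation maximization in the augmented state variable $(S,B,W^*)$. Second, I would substitute $W^* = \mathcal{W}^*(0,W_0^-)$ (the argmax from equation (\ref{pcee_argmax})) and note that the inner supremum of $V$ over $\mathcal{P}_0$ then returns the pre-commitment optimal control $\mathcal{P}^*$ by the definition of $\mathcal{W}^*$. Third, since $\kappa W^*$ is a constant (independent of both $\mathcal{P}_n$ and the random state trajectory) and the tie-breaking term $\epsilon W_T$ can be absorbed into or dropped from the objective without altering the argmax to leading order in $|\epsilon|$, it can be removed from the objective without changing the optimizer. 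What remains inside the supremum is exactly the integrand of $\tilde{J}$ in equation (\ref{timec_equiv}), evaluated with the fixed constant $\mathcal{W}^*(0,W_0^-)$.

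Fourth, and this is the step where time consistency is obtained, I would apply the dynamic programming principle to this reduced problem. Since $\mathcal{W}^*(0,W_0^-)$ is a constant determined once and for all at $t_0^-$, the controlled state process is Markovian in $(S,B)$ and the objective is an ordinary expectation of a sum of terminal and running rewards. By the standard tower-property argument (as in equations (\ref{dynamic_a})--(\ref{dynamic_c})), the Bellman recursion
\begin{equation*}
   \tilde{J}(s,b,t_n^-) = \sup_{\qq \in \mathcal{Z}_\qq} \Bigl\{\qq + \sup_{\pp \in \mathcal{Z}_\pp} E_{\mathcal{P}_{n+1}}^{X_n^+,t_n^+}\bigl[\tilde{J}(S(t_{n+1}^-),B(t_{n+1}^-),t_{n+1}^-)\bigr]\Bigr\}
\end{equation*}
defines a time-consistent feedback control $\tilde{\mathcal{P}}^*$, and uniqueness of the optimizer (under standard smoothness assumptions) forces $\tilde{\mathcal{P}}^* \equiv \mathcal{P}^*$. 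This identification of the two controls establishes the equivalence.

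The main obstacle is the subtle point that $\mathcal{W}^*$ must be held fixed at its time-zero value for all $t > 0$ in order for the Bellman recursion to apply cleanly; if one were to re-optimize $W^*$ at a later time $t_n > t_0$ using the conditional wealth distribution at $t_n$, one would generically obtain a different $\mathcal{W}^*(X_n^-,t_n^-)$, and the resulting strategy would differ from $\mathcal{P}^*$ — this is the usual source of time-inconsistency of pre-commitment expected shortfall. I would therefore make explicit that "induced time consistent" refers precisely to freezing $\mathcal{W}^*$ at its $t_0^-$ value, so that the augmented state $(S,B,\mathcal{W}^*(0,W_0^-))$ evolves trivially in its third coordinate and Bellman optimality carries through in the first two. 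A brief verification argument — showing that the policy obtained by applying DP to TCEQ with this frozen constant agrees at $t_0^-$ with $\mathcal{P}^*$, and then propagating forward by the tower property — completes the proof.
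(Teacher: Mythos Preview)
Your proposal is correct and follows essentially the same approach as the paper: the paper's own proof simply refers to Proposition~6.2 of \citet{forsyth_2019_c} (noting only that the running reward here is total withdrawals rather than expected terminal wealth), and your outline---fix $W^*=\mathcal{W}^*(0,W_0^-)$ from the outer supremum, strip the constant $\kappa W^*$ and the stabilization term, then apply the standard Bellman recursion to the resulting additive expected-value problem---is precisely that argument spelled out. One small refinement: rather than invoking smoothness for uniqueness of the optimizer, the paper handles possible non-uniqueness by imposing the same tie-breaking rule in both problems (see the footnotes to Remark~A.1), which is the cleaner way to force $\tilde{\mathcal{P}}^*\equiv\mathcal{P}^*$.
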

\begin{proof}
This follows  similar steps as in \citet{forsyth_2019_c}, proof of Proposition 6.2, with the exception
that the reward in \citet{forsyth_2019_c} is expected terminal wealth, while here the reward is
total withdrawals.
\end{proof}

\begin{remark}[An Implementable Strategy]
Given an initial level of wealth $W_0^-$ at $t_0$, then the optimal control\footnote{To be perfectly 
precise here, in the event that the control is non-unique, we impose a tie-breaking
strategy to generate a unique control.}
for the pre-commitment problem (\ref{PCES_a}) is the same optimal control
for the time consistent problem\footnote{Assuming that the same
tie breaking strategy is used as for the pre-commitment
problem.} $\left(\mathit{TCEQ_{t_n}}\left(\kappa / \alpha  \right)\right)$
(\ref{timec_equiv}), $\forall t >0$.  Hence we can regard problem
$\left(\mathit{TCEQ_{t_n}}\left(\kappa / \alpha  \right)\right)$
as the {\em EW-ES induced time consistent strategy}.
Thus, the induced strategy is implementable,
in the sense that the investor has no incentive to deviate
from the strategy computed at time zero, at later times \citep{forsyth_2019_c}.
\end{remark}

\begin{remark}[EW-ES Induced Time Consistent Strategy]
In the following, we will consider the  actual strategy followed by
the investor for any $t>0$ as given by the induced time consistent
strategy $\left(\mathit{TCEQ_{t_n}}\left(\kappa / \alpha  \right)\right)$
in equation (\ref{timec_equiv}), with a fixed value of $\mathcal{W}^*(0, W_0^-)$,
which is identical to the EW-ES strategy at time zero.
Hence, we will refer to this strategy in the following as the EW-ES strategy, 
with the understanding that this refers to strategy 
$\left(\mathit{TCEQ_{t_n}}\left(\kappa / \alpha  \right)\right)$
for any $t>0$.
\end{remark}

\section{Numerical Techniques}\label{Numerical_Appendix}
We solve problems (\ref{PCES_a})  using the techniques described in
detail in \citet{forsythlabahn2017,forsyth_2019_c,Forsyth_2021_b}.
We give only a brief overview here.

We localize the infinite domain to $(s,b) \in [s_{\min}, s_{\max}] \times [b_{\min}, b_{\max}]$, and
discretize $[b_{\min},b_{\max}]$ using  an equally spaced $\log b$ grid, with $n_b$ nodes.
Similarly, we discretize $[s_{\min}, s_{\max}]$ on an equally spaced $\log s$ grid, with $n_s$ nodes.
Localization errors are minimized using the domain extension method in \citep{forsythlabahn2017}.

At rebalancing dates, we solve the local optimization problem  (\ref{dynamic_c}) by discretizing $(\qq ( \cdot), \pp(\cdot) )$ and
using exhaustive search.
Between rebalancing dates, we solve the two dimensional partial integro-differential equation (PIDE) (\ref{expanded_7} )
using Fourier methods \citep{forsythlabahn2017,Forsyth_2021_b}.
Finally, the optimization problem (\ref{final_step_EWES})
is solved using a one-dimensional optimization technique.

We used the value $\epsilon = -10^{-4}$ in equation (\ref{expanded_1}),
which forces the investment strategy to be bond heavy if the remaining
wealth in the investor's account is large, and $t \rightarrow T$.
Using this small value of  gave the same results as
$\epsilon = 0$ for the summary statistics, to four digits.  This is simply because
the states with very large wealth have low probability.  However, this stabilization
procedure produced  smoother heat maps for large wealth values, without altering
the summary statistics appreciably.

\subsection{Convergence Test: Synthetic Market}
We compute and store the optimal controls from solving Problem \ref{PCES_a}
using the parametric
model of the stock and bond processes.  We then use the stored controls in Monte Carlo
simulations to generate statistical results.  As a robustness check, we also use the stored
controls and simulate results using bootstrap resampling of historical data.

Table \ref{conservative_accuracy} shows a detailed convergence test for the base case
problem given in Table \ref{base_case_1}, for the EW-ES problem.  The results are given for
a sequence of grid sizes, for the dynamic programming algorithm 
in Section \ref{DP_program} and Appendix \ref{Numerical_Appendix}.
The dynamic programming algorithm appears to converge at roughly a second
order rate.  The optimal control computed using dynamic programming is
stored, and then used in Monte Carlo computations.  The MC results are in
good agreement with the dynamic programming solution.

For all the numerical examples, we will use the $2048 \times 2048$  grid, since this
seems to be accurate enough for our purposes.

\begin{table}[hbt!]
\begin{center}
\begin{tabular}{lccc|cc} \toprule
 & \multicolumn{3}{|c|}{Algorithm in Section \ref{DP_program} and Appendix \ref{Numerical_Appendix} } & \multicolumn{2}{c}{Monte Carlo} \\ \midrule
Grid &   ES (5\%) & $E[ \sum_i \qq_i]/M$  & Value Function &   ES (5\%) & $E[ \sum_i \qq_i]/M$  \\
\midrule
%  redo with eps = +
$512 \times 512$ &  108.13   & 67.99     & 2059.60     & 123.26    & 68.04        \\
$1024 \times 1024$ &  158.88 & 67.79     & 2063.19     & 164.45    & 67.81       \\ 
$2048 \times 2048$ &  201.88  & 67.56   & 2064.27    & 203.87    & 67.56        \\
$4096 \times 4096$ & 206.56    & 67.54   & 2064.54    &  207.70   &  67.54      \\
\midrule
\bottomrule
\end{tabular}
\caption{Convergence test,
real stock index: deflated real capitalization weighted CRSP, real bond index: deflated
30 day T-bills.  Scenario in Table \ref{base_case_1}.
Parameters in Table \ref{fit_params}.
The Monte Carlo method used
$2.56 \times 10^6$ simulations.  The MC method used the control
from the algorithm in Section \ref{DP_program}.
$\kappa = 0.185, \alpha = .05$
Grid refers to the grid
used in the Algorithm in Section \ref{Numerical_Appendix}: $n_x \times n_b$, where $n_x$ is
the number of nodes in the $\log s$ direction, and
$n_b$ is the number of nodes in the $\log b$ direction.
Units: thousands of dollars (real).
$M$ is the total number of withdrawals (rebalancing dates).
\label{conservative_accuracy}}
\end{center}
\end{table}

\begin{table}[hbt!]
\begin{center}
\begin{tabular}{lccc|cc} \toprule
  & \multicolumn{3}{c|}{Algorithm in Section \ref{DP_program} and Appendix \ref{Numerical_Appendix} } & \multicolumn{2}{c}{Monte Carlo} \\ \midrule
Grid &   ES (5\%) & $E[ \sum_i \qq_i]/T$  & Value Function &   ES (5\%) & $E[ \sum_i \qq_i]/M$  \\
\midrule
$512 \times 512$ & -203.31 &54.08  &   860.033  &    -191.99 & 53.96     \\
$1024 \times 1024$ & -191.40 &53.58 &  889.613  &    -188.07 &53.53     \\
$2048 \times 2048$ &  -188.91 &53.57 & 898.712    &    -188.14 &53.55   \\
$4096 \times 4096$ &  -188.04 & 53.54 &  901.106   &  -187.95 & 53.53   \\
\midrule
\bottomrule
\end{tabular}
\caption{No tontine case. Convergence test,
real stock index: deflated real capitalization weighted CRSP,  real bond index: deflated
30 day T-bills.  Scenario in Table \ref{base_case_1}, but no tontine.
Parameters in Table \ref{fit_params}.
The Monte Carlo method used
$2.56 \times 10^6$ simulations.
The MC method used the control
from the algorithm in Section \ref{DP_program}.
$\kappa = 3.75, \alpha = .05$
Grid refers to the grid
used in the Algorithm in Section \ref{Numerical_Appendix}: $n_x \times n_b$, where $n_x$ is
the number of nodes in the $\log s$ direction, and
$n_b$ is the number of nodes in the $\log b$ direction.
Units: thousands of dollars (real).
$M$ is the total number of withdrawals (rebalancing dates).
$W^* = -106.476$ on the finest grid.
\label{conservative_accuracy_no_tontine}}
\end{center}
\end{table}

\section{Continuous Withdrawal/Rebalancing Limit}\label{appendix_limit}
In order to develop some intuition about the nature of the optimal controls,
we will examine the limit as the rebalancing interval becomes vanishingly small.

\begin{proposition}[Bang-bang withdrawal control in the continuous withdrawal limit]
\label{bang_bang_prop}
Assume that
\begin{itemize}
   \item the stock and bond processes follow (\ref{jump_process_stock})
and (\ref{jump_process_bond}),

  \item the portfolio is continuously rebalanced, and withdrawals occur at
     a continuous (finite) rate $\hat{\qq} \in [\hat{\qq}_{\min}, \hat{\qq}_{\max}]$,

  \item the HJB equation for the EW-ES  problem in the continuous rebalancing limit has
        bounded derivatives w.r.t. total wealth,

  \item in the event of ties for the control $\hat{\qq}$, the smallest withdrawal is selected,
\end{itemize}
then the optimal withdrawal control $\hat{\qq}^*(\cdot)$ for the EW-ES problem 
$(PCES_{t_0}(\kappa))$  and for the EW-LS problem $\left(\mathit{EWLS_{t_0}}\left( \hat{\kappa} \right)\right)$
is bang-bang,
$\hat{\qq}^* \in \{\hat{\qq}_{\min}, \hat{\qq}_{\max} \}$.
\end{proposition}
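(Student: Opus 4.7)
The plan is to pass to the continuous-rebalancing/continuous-withdrawal limit, derive the associated HJB equation, and exploit the fact that the withdrawal rate $\hat{\qq}$ enters the Hamiltonian linearly. Since rebalancing is continuous, the stock fraction $\pp$ is a pure control at each instant rather than a state variable, so I would work with the scalar wealth process $W_t = S_t + B_t$. Combining (\ref{jump_process_stock}) and (\ref{jump_process_bond}) with the continuous-time analogues of the tontine gain rate $\mathbb{T}^g$ and fee rate $\mathbb{T}^f$, together with an absolute withdrawal rate $\hat{\qq} \in [\hat{\qq}_{\min}, \hat{\qq}_{\max}]$, one obtains
\begin{eqnarray}
  dW_t &=& W_{t^-}\bigl[\pp\mu^s + (1-\pp)\mu^b - \pp\lambda_\xi^s\gamma_\xi^s - (1-\pp)\lambda_\xi^b\gamma_\xi^b + \mathbb{T}^g - \mathbb{T}^f\bigr]\,dt - \hat{\qq}\,dt \nonumber \\
       & & {}+ W_{t^-}\bigl[\pp\sigma^s\,dZ^s + (1-\pp)\sigma^b\,dZ^b\bigr] + (\text{jump terms}).
\end{eqnarray}
Via Proposition \ref{equiv_thm}, with $\mathcal{W}^*$ fixed at its optimal value, the EW-ES problem reduces to a time-consistent control problem whose value function $V(W,t)$ satisfies a standard HJB equation; the EW-LS problem is handled analogously.

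Next I would collect every term involving $\hat{\qq}$ in the Hamiltonian. The only contributions are $+\hat{\qq}$ from the running reward and $-\hat{\qq}\,V_W$ from the deterministic drain of wealth; the diffusion, jump, tontine, fee, and terminal shortfall pieces are all independent of $\hat{\qq}$. Hence the $\hat{\qq}$-dependent part of the Hamiltonian is
\begin{eqnarray}
  \mathcal{H}(\hat{\qq}) \;=\; \hat{\qq}\bigl(1 - V_W\bigr),
\end{eqnarray}
which is affine in $\hat{\qq}$ over the compact interval $[\hat{\qq}_{\min},\hat{\qq}_{\max}]$.

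Then I would invoke the standard bang-bang argument. Under the smoothness hypothesis that $V$ has bounded derivatives in $W$, the sign of $1 - V_W$ is pointwise well defined, and maximising an affine function over a compact interval is attained at an endpoint: $\hat{\qq}^*(\cdot) = \hat{\qq}_{\max}$ wherever $V_W < 1$, $\hat{\qq}^*(\cdot) = \hat{\qq}_{\min}$ wherever $V_W > 1$, and on the set $\{V_W = 1\}$ every admissible value achieves the sup, so the tie-breaking rule picks $\hat{\qq}^*(\cdot) = \hat{\qq}_{\min}$. In every case $\hat{\qq}^*(\cdot) \in \{\hat{\qq}_{\min},\hat{\qq}_{\max}\}$. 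The argument is identical for EW-LS since its shortfall penalty depends only on terminal wealth and hence contributes no $\hat{\qq}$-dependent term.

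The main obstacle is not the linear-in-$\hat{\qq}$ calculation itself but the rigorous justification of the HJB in the continuous-withdrawal limit, especially near the insolvency boundary where the admissible set in (\ref{Z_q_def}) degenerates and the dynamics (\ref{minus_b_2}) takes over; this is precisely why the hypothesis of bounded derivatives of $V$ with respect to total wealth is imposed. A secondary check is verifying that the tontine gain and fee terms, which scale multiplicatively with $W$, do not introduce an implicit $\hat{\qq}$-dependence in the Hamiltonian — they do not, since they couple to $W$ rather than to the withdrawal rate, so the coefficient of $\hat{\qq}$ remains exactly $(1 - V_W)$ and the bang-bang conclusion follows.
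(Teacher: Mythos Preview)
Your proposal is correct and follows exactly the standard route: reduce to a one-dimensional wealth HJB in the continuous-rebalancing limit, observe that $\hat{\qq}$ enters the Hamiltonian only through the affine term $\hat{\qq}(1-V_W)$, and conclude bang-bang from the endpoint maximisation of an affine function, with the tie-breaking assumption handling the singular set $\{V_W=1\}$. The paper's own proof is simply a pointer to \citet{Forsyth_2021_b}, which carries out precisely this linear-in-control HJB argument, so your sketch is essentially the same approach with the minor addition that you explicitly check the tontine gain and fee terms (being multiplicative in $W$) do not disturb the coefficient of $\hat{\qq}$.
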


\begin{proof}
This follows the same steps as in \citet{Forsyth_2021_b}.
\end{proof}

\begin{remark}[Bang-bang control for discrete rebalancing/withdrawals]
\label{quasi_prop}
Proposition \ref{bang_bang_prop} suggests that, for sufficiently 
small rebalancing intervals, we can expect the
optimal $\qq$ control (finite withdrawal amount) to be bang-bang,
i.e. it is only optimal to withdraw  either the maximum
amount ${\qq}_{\max}$ or the minimum amount ${\qq}_{\min}$.  
However, it is not clear that this will continue to
be true for the case of yearly rebalancing (which we specify in our numerical examples),
and finite amount controls $\qq$.  In fact,
we do observe that the finite amount control $\qq$ is very close to bang-bang in our numerical experiments,
even for yearly rebalancing. 
\end{remark}

\section{Detailed Efficient Frontiers: Synthetic Market}
\label{detailed_frontier_appendix_1}

{\small
\begin{table}[hbt!]
\begin{center}
{\small
\begin{tabular}{ccccc} \hline
${\kappa}$  & $E[ \sum_i \qq_i]/T$  & ES (5\%) & $Median[W_T]$ &$ W^*$    \\ \hline
 0.15  &          70.06  &     -309.569  &    189.48    &     .490\\
    0.170     &    70.04 &    -270.13    & 185.19    &     .489\\
    .180      &    68.51 &      46.77    &599.42     &    385.28\\
    .185      &    67.56 &      203.87 &   820.65     &   585.97\\
    .20       &    66.41 &      384.76 &   1058.40    &   802.40\\
    0.25      &    63.85 &      732.34 &   1517.04    &  1220.33\\
    0.30      &    62.22 &      912.29 &   1754.40    &  1439.83\\
    0.50      &    58.48 &      1209.40&    2120.59   &  1802.19\\
    1.0       &    54.81 &      1372.46&   2327.42    &  2021.22\\
    10.0      &    48.96 &      1457.52&   2484.58    &2151.79\\
    $\infty$  &    40    &    1460.76    &2885.85 &      2173.04\\
\hline
\end{tabular}
}
\caption{EW-ES synthetic market results for optimal strategies,  assuming
the scenario given in Table~\ref{base_case_1}. Tontine gains assumed.  Stock index: real capitalization weighted CRSP stocks;
bond index: real 30-day T-bills.  Parameters from Table \ref{fit_params}.
Units: thousands of dollars. Statistics
based on $2.56 \times 10^6$ Monte Carlo simulation runs.
Control is computed using the Algorithm in Section \ref{DP_program} and Section \ref{Numerical_Appendix}, stored, and then used
in the Monte Carlo simulations.
$q_{\min} = 0.40$, $q_{\max} = 80$ (annually).
$T = 30 $ years,
$\epsilon = -10^{-4}$.
\label{synthetic_EW_ES_table_synthetic}
}
\end{center}
\end{table}
}

{\small
\begin{table}[hbt!]
\begin{center}
{\small
\begin{tabular}{ccccc} \hline
${\kappa}$  & $E[ \sum_i \qq_i]/T$  & ES (5\%) & $Median[W_T]$ &$ W^*$    \\ \hline
0.180   &   69.17     &       -823.76  &   -2.51   &   -691.81\\
 1.0    &    61.38    &        -319.66 &    -39.47  &  -229.18\\
 1.5    &   58.98     &        -260.92 &    -65.88  & -179.60\\
 1.75   &   57.97     &        -242.34 &   -74.74   &-161.25\\
 2.5    &   55.86     &        -211.03 &   -81.44 &  -132.87\\
 3.75   &   53.55     &        -188.14 &  -81.11  &-107.00\\
 5.0    &  52.08      &        -177.88 &   -78.39 &  -90.10\\
 6.25   &  51.29      &        -173.59 &    -79.08&   -89.03\\
  7.5   &  50.72      &        -171.05 &     -79.3&     -88.25\\
  10.0  &   49.89     &        -168.16 &     -78.77 &   -87.18\\
   100.0 &   46.41    &         -162.86&       -68.28 &  -77.47\\
   $\infty$ &    40.0 &         -162.67  &     +5.72   &           -76.0\\
\hline
\end{tabular}
}
\caption{EW-ES synthetic market results for optimal strategies,  assuming
the scenario given in Table~\ref{base_case_1}. No tontine gains assumed.  Stock index: real capitalization weighted CRSP stocks;
bond index: real 30-day T-bills.  Parameters from Table \ref{fit_params}.
Units: thousands of dollars. Statistics
based on $2.56 \times 10^6$ Monte Carlo simulation runs.
Control is computed using the Algorithm in Section \ref{DP_program} and Appendix \ref{Numerical_Appendix}, stored, and then used
in the Monte Carlo simulations.
$q_{\min} = 0.40$, $q_{\max} = 80$ (annually).
$T = 30 $ years,
$\epsilon = -10^{-4}$.
\label{synthetic_EW_ES_table_synthetic_no_tontine_gains}
}
\end{center}
\end{table}
}

\clearpage

\section{Detailed Efficient Frontiers: Historical Market}
\label{detailed_appendix_bootstrap}

{\small
\begin{table}[hbt!]
\begin{center}
{\small
\begin{tabular}{ccccc} \hline
${\kappa}$ &  $E[ \sum_i \qq_i]/T$  & ES (5\%) & $Median[W_T]$   \\ \hline
\hline
0.15 &  71.25  &   -165.23    &157.16\\
    .170 &  71.01  &  -138.15 &153.13\\
    .180 &  68.94  &   204.20 & 573.29\\
    .185 &  67.99  &   369.26 & 769.96\\
     .20 &  66.64  &    546.98&  1038.07\\
     .25 &  63.84  &    863.20&   1500.51\\
    0.30 &  62.08  &    1011.55 & 1739.21\\
    0.5  &  58.13  &    1211.18 & 2115.22\\
    1.0  &   54.50 &   1285.93  &  2330.33\\
    10.0 &   49.42 &  1275.98   &2485.58\\
$\infty$ &   40  &    1280.97&   2892.41  \\ 
\end{tabular}
}
\caption{EW-EW historical market results for optimal strategies,  assuming
the scenario given in Table~\ref{base_case_1}. Tontine gains assumed.  Stock index: real capitalization weighted CRSP stocks;
bond index: real 30-day T-bills.  Parameters from Table \ref{fit_params}.
Units: thousands of dollars. Statistics
based on $10^6$  bootstrap simulation runs. Expected blocksize $=2$ years.
Control is computed using the Algorithm in Section \ref{DP_program} and Appendix \ref{Numerical_Appendix}, stored, and then used
in the bootstrap simulations.
$q_{\min} = 40$, $q_{\max} = 80$ (annually).
$T = 30 $ years,
$\epsilon = -10^{-4}$.
\label{historical_EW_ES_table}
}
\end{center}
\end{table}
}

{\small
\begin{table}[hbt!]
\begin{center}
{\small
\begin{tabular}{ccccc} \hline
${\kappa}$ &  $E[ \sum_i \qq_i]/T$  & ES (5\%) & $Median[W_T]$   \\ \hline
\hline
   .180 & 69.91 &-805.65 & -31.84\\
   1.0  & 61.77 &-290.03 & -40.87\\
   1.5  & 59.21 & -248.15& -77.26\\
   1.75 & 58.16 & -235.46& -78.50\\
   2.5  &6.02   &-219.00 &-81.84\\
   3.75 & 53.78 &  -209.9& -80.68\\
   5.0  &52.43  &  -207.15 & -77.25\\
   6.25 &51.74  &-209.02   & -78.11\\
   7.5  & 51.26 &-210.38   &-78.48\\
   10.0 & 50.58 & -212.41 &  -77.95\\
   100.0& 47.72 &-217.82  &  -67.91\\
 $\infty$ &  40.0 &-219.16  &      +17.34\\
\end{tabular}
}
\caption{EW-EW historical market results for optimal strategies,  assuming
the scenario given in Table~\ref{base_case_1}. No Tontine gains assumed.  Stock index: real capitalization weighted CRSP stocks;
bond index: real 30-day T-bills.  Parameters from Table \ref{fit_params}.
Units: thousands of dollars. Statistics
based on $10^6$  bootstrap simulation runs. Expected blocksize $=2$ years.
Control is computed using the Algorithm in Section \ref{DP_program} and Appendix \ref{Numerical_Appendix}, stored, and then used
in the bootstrap simulations.
$q_{\min} = 40$, $q_{\max} = 80$ (annually).
$T = 30 $ years,
$\epsilon = -10^{-4}$.
\label{historical_EW_ES_table_no_tontine}
}
\end{center}
\end{table}
}

\clearpage

\begin{singlespace}
\bibliographystyle{chicago}
\bibliography{paper}
\end{singlespace}

\end{document}